\documentclass[11pt,a4paper]{article}
\usepackage{tikz,amssymb,amsmath,amsthm,multirow,color,xcolor,cancel,bbm}
\usepackage[textheight=25cm,textwidth=16cm]{geometry}
\newtheorem{proposition}{Proposition}
\newtheorem{lemma}{Lemma}

\newtheorem{question}{Question}
\newtheorem{remark}{Remark}

\newtheorem{theorem}{Theorem}
\newtheorem{definition}{Definition}
\newcounter{claimcounter}
\newtheorem*{claim*}{{\it Claim}}
\newtheorem*{claim}{\stepcounter{claimcounter}{\it Claim \theclaimcounter}}
\newenvironment{subproof}{\begin{proof}[Subproof]}{\end{proof}}
\newcommand{\N}{\mathbbm{N}}
\newcommand{\cst}{\mathrm{cst}}
\newcommand{\One}[1]{{\mathbbm 1}{\{#1\}}}

\title{Simple dynamics on graphs}

\author{
Maximilien Gadouleau\footnote{School of Engineering and Computing Sciences, Durham University, UK. \tt{m.r.gadouleau@durham.ac.uk}}
\and 
Adrien Richard\footnote{Laboratoire I3S, CNRS \& Universit\'e de Nice-Sophia Antipolis, France.
\tt{richard@unice.fr}}
\footnote{Corresponding author.}  
}

\date{December 8, 2015\footnote{This work is partially supported by CNRS and Royal Society through the International Exchanges Scheme grant {\em Boolean networks, network coding and memoryless computation.}}}

\begin{document}

\maketitle

\begin{abstract}
Does the interaction graph of a finite dynamical system can force this system to have a ``complex'' dynamics ? In other words, given a finite interval of integers $A$, which are the signed digraphs $G$ such that every finite dynamical system $f:A^n\to A^n$ with $G$ as interaction graph has a ``complex'' dynamics ? If $|A|\geq 3$ we prove that no such signed digraph exists. More precisely, we prove that for every signed digraph $G$ there exists a system $f:A^n\to A^n$ with $G$ as interaction graph that converges toward a unique fixed point in at most $\lfloor\log_2 n\rfloor+2$ steps. The boolean case $|A|=2$ is more difficult, and we provide partial answers instead. We exhibit large classes of unsigned digraphs which admit boolean dynamical systems which converge toward a unique fixed point in polynomial, linear or constant time. 
\end{abstract}


\section{Introduction} \label{sec:intro}

Let $A=\{0,1,\dots,s\}$ be a finite integer interval, and let $n$ be a positive integer. A {\em finite dynamical system} is a function 
\[
f:A^n\to A^n,\qquad  x=(x_1,\dots,x_n)\mapsto f(x)=(f_1(x),\dots,f_n(x)).
\]
If $|A|=2$, such a system is called {\em boolean network}. Finite dynamical systems, and boolean networks in particular, have many applications: they have been used to model gene networks \cite{K69,T73,TK01,J02}, neural networks \cite{MP43,H82,G85,GM91}, social interactions \cite{PS83, GT83} and more (see \cite{TA90,GM90}). 

\medskip
The structure of a finite dynamical system $f$ can be represented via its {\em interaction graph} $G$, which roughly describes the dependencies between the variables of the systems (depending on the context, this graph is sometimes called {\em dependency graph}, {\em influence graph} or {\em regulatory graph}). More formally, $G$ is a digraph with vertex set $\{1,\dots,n\}$ and an arc from $j$ to $i$ if $f_i(x)$ depends on $x_j$. An arc from $j$ to $i$ can also be labeled by a sign indicating whether $f_i(x)$ is an increasing (positive sign), decreasing (negative sign), or non-monotone (zero sign) function of $x_j$. This is commonly the case when modelling gene networks, since a gene can typically either activate (positive sign) or inhibit (negative sign) another gene. 

\medskip
In many contexts, as in molecular biology, the interaction graph is known--or at least well approximated--, while the actual function $f$ is not. A natural and difficult question is then the following: {\em what can be said on system $f:A^n\to A^n$ according to its interaction graph only?} Among the many dynamical properties that can be studied, fixed points are crucial because they represent stable states \cite{R86,TA90,GM91}. As such, they are arguably the property which has been the most thoroughly studied (see \cite{R86,SD05,RRT08,A08,GS10,GRS14} and the references therein).

\medskip
In this paper, we are interested in ``simple'' dynamics, considering that a dynamics is simple if it describes a fast convergence toward a unique fixed point. Formally, $f$ converges towards a unique fixed point in $k$ steps if $f^k$ is a constant. In that case, we say that $f$ is a {\em nilpotent function} and the minimal $k$ such that $f^k$ is a constant is called the {\em class} of $f$. Also, we say that a signed or unsigned digraph $G$ {\em admits} a function $f$ if $G$ is the signed or unsigned version of the interaction graph of $f$. 

\medskip
A fundamental result of Robert is the following: {\em if the interaction graph of $f:A^n\to A^n$ is acyclic then $f$ is a nilpotent function of class at most $n$} \cite{R86}. This shows that ``simple'' interaction graphs imply ``simple'' dynamics. But conversely, does ``complex'' interaction graphs imply ``complex'' dynamics ? More precisely, which are the interaction graphs that can force a system to have a non simple dynamics? This is the question we study in this paper. 

\medskip
We first study the non-boolean case $|A|\geq 3$ in Section~\ref{sec:non-boolean_case}. Essentially, we show that every signed digraph $G$ on $n$ vertices admits a nilpotent function $f:A^n\to A^n$ of class at most $\lfloor\log_2 n\rfloor+2$. Furthermore, if $|A|>3$ then the upper-bound on the class of $f$ can be reduced to only $2$. Hence, in the non-boolean case, we cannot conclude that a system $f$ has a non simple dynamics from its interaction graph only.     

\medskip
We then study the boolean case $|A|=2$ in Section~\ref{sec:boolean_case}, which is more difficult. First, not all digraphs admit a boolean nilpotent function. The directed cycle is the most simple example, and it seems very difficult to characterize the digraphs that admit a boolean nilpotent function. Thus we provide partial answers. We exhibit large classes of unsigned digraphs which admit boolean dynamical systems which converge toward a unique fixed point in polynomial, linear or constant time. In particular, we prove that if $G$ has a primitive spanning strict subgraph then $G$ admits a boolean nilpotent function of class at most $n^2-2n+3$. We also prove that if $G$ is strongly connected and if the out-neighborhood of some vertex of $G$ induces a non-acyclic digraph, then $G$ admits a boolean nilpotent function $f$ of class at most $2n-1$. Besides, we prove that if $G$ is a loop-less connected symmetric digraph with at least three vertices, then $G$ admits a boolean nilpotent function $f$ of class $3$. We have not been able to prove or disprove the following assertion: there exists a constant $c$ such that, for every digraph $G$ with $n$ vertices, if $G$ admits a boolean nilpotent function, then $G$ admits a boolean nilpotent function of class at most $cn$.

\section{Preliminaries} \label{sec:def}

The vertex set of a digraph $G$ is denoted $V(G)$ and its arc set, which is a subset of $V(G)\times V(G)$, is denoted $A(G)$. The in-neighborhood of a vertex $v$ is denoted $G(v)$; this is an non-usual but very convenient notation for our purpose. Other notations and terminologies on digraphs are usual and consistent with \cite{BG08}. Paths and cycles of are always directed, without repetition of vertices, and seen as subgraphs. The subgraph of $G$ induced by a set of vertices $I\subseteq V(G)$ is denoted $G[I]$. If $X$ is an arc, a vertex, a set of arcs, or a set of vertices, then $G\setminus X$ is the subgraph obtain from $G$ by removing $X$ or the elements in $X$. We say that $G$ is {\em strong} if $G$ is strongly connected. A strongly connected component $I$ ({\em strong component} for short) of $G$ is {\em initial} if there is no arc $(u,v)$ with $u\not\in I$ and $v\in I$. If $G$ and $G'$ are two digraphs, then $G\cup G'$ is the digraph with vertex set $V(G)\cup V(G')$ and arc set $A(G)\cup A(G')$. A digraph {\em on} a set $V$ is a digraph with vertex set $V$. A {\em tree} is a digraph in which all the vertices have in-degree one, excepted one vertex, called the {\em root}, which has in-degree zero. A {\em forest} is a digraph in which all the connected components are trees.  A {\em loop} is an arc from a vertex to itself. A vertex is {\em linear} if it has a unique in-neighbor and a unique out-neighbor.

\medskip
A signed digraph $G$ consists in a digraph, denoted $|G|$, together with a map that labels each arc of $|G|$ by a positive, negative or null sign. We say that an arc is {\em signed} if it is positive or negative, and {\em unsigned} otherwise. The digraph obtained from $G$ by keeping only positive arcs is denoted $G^+$. We define similarly $G^-$ and $G^0$. The digraph obtained by keeping only signed arcs is denoted $G^\pm$ (thus $G^\pm=G^+\cup G^-$). A cycle of $G$ is positive (resp. negative) if it contains an unsigned arc or an even (resp. odd) number of negative arcs. In the following, all graph-theoretic concepts that do not involve signs are applied on $G$ or $|G|$ indifferently.

\medskip
Let $A$ be a finite interval of integers, let $n$ be a positive integer and $[n]=\{1,\dots,n\}$. A function {\em over} $A$ is a map $f:A^n\to A^n$. A function over $\{0,1\}$ is a {\em boolean function}. As usual, for all $k\in\mathbb{N}$ we set $f^k=\mathrm{id}$ if $k=0$ and $f^{k}=f\circ f^{k-1}$ otherwise. If $f$ is any function, we write $f=\cst$ to mean that $f$ is a constant. In the following, functions are often defined using conjunctions ($\land$) disjunctions ($\lor$) and exclusive disjunctions ($\oplus$). If $I\subseteq [n]$ and $x\in\{0,1\}^I$ then, by convention, $\lor_{i\in I} x_i=\oplus_{i\in I}x_i=0$ and $\land_{i\in I}x_i=1$ if $I$ is empty, and $\lor_{i\in I} x_i=\oplus_{i\in I}x_i=\land_{i\in I}x_i=x_i$ if $I=\{i\}$. 

\begin{definition}
A function $f$ over $A$ is {\em nilpotent} if there exists $k\in\mathbb{N}$ such that $f^k$ is constant. If $f$ is nilpotent, then the smallest $k$ such that $f^k$ is a constant is the {\em class} of $f$. 
\end{definition}

\begin{definition}
The {\em interaction graph} of a function $f$ over $A$ is the signed digraph $G(f)$ on $[n]$ with arcs defined as follows: for all $j,i\in [n]$, there is an arc $(j,i)$ if $f_i(a)\neq f_i(b)$ for some $a,b\in A^n$ such that $a_j < b_j$ and $a_k = b_k$ for all $k \neq j$; and an arc $(j,i)$ is positive if $f_i(a) \leq f_i(b)$ for all such $a$ and $b$, negative if $f_i(a) \geq f_i(b)$ for all such $a$ and $b$, and null otherwise.
\end{definition}

Hence, $G(f)$ has an arc $(j,i)$ if and only if $f_i(x)$ depends essentially on $x_j$, and the sign of an arc $(i,j)$ is positive (resp. negative) if an only if for every fixed $x_k$, $k\neq j$, $f_i(x)$ is a non-decreasing (resp. non-increasing) function of $x_j$.  

\begin{definition}
Let $G$ be a signed digraph and let $f$ be a function over $A$. If $G(f)=G$ then we say that $G$ {\em admits} $f$ and that $f$ is a {\em $G$-function}. Similarly, if $|G(f)|=|G|$ then we say that $|G|$ {\em admits} $f$ and that $f$ is a {\em $|G|$-function}.
\end{definition}

Given a signed or unsigned digraph $G$, we are interested in the existence of a nilpotent $G$-function. According to the following proposition, it makes sense to focus on the minimal alphabet size for which such a function exists. 

\begin{proposition}\label{pro:alphabet}
Let $A$ and $B$ be two finite intervals of integers with $A\subseteq B$. If a signed digraph admits a nilpotent function of class $k$ over $A$ then it admits a nilpotent function of class~$k$~over~$B$.
\end{proposition}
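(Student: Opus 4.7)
The plan is to extend $f$ from $A^n$ to the larger cube $B^n$ by pre-composing with a coordinatewise clamping retraction. Writing $a_{\min}=\min A$ and $a_{\max}=\max A$, I would define $\pi:B\to A$ by $\pi(x)=\max(a_{\min},\min(a_{\max},x))$ and extend it coordinatewise to $\pi:B^n\to A^n$. The two features that make this choice work are that $\pi$ is non-decreasing in each variable and that its restriction to $A^n$ is the identity. Then set $g(x):=f(\pi(x))$ for all $x\in B^n$.

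For the class, note that $g(x)\in A^n$ for every $x\in B^n$, and since $\pi$ fixes $A^n$ pointwise, an immediate induction gives $g^m(x)=f^m(\pi(x))$ for every $m\ge 1$. Hence $f^k=\cst$ forces $g^k=\cst$. Conversely, if $g^{k-1}$ were constant, its restriction to $A^n$---which coincides with $f^{k-1}$---would be constant too, contradicting the minimality of $k$. Thus $g$ has class exactly $k$.

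It then remains to check that $G(g)=G(f)$ as signed digraphs. For the unsigned arcs: any witness $a,b\in A^n$ of $(j,i)\in G(f)$ is still a witness for $g$ since $\pi$ acts trivially on it; conversely, a witness $a,b\in B^n$ for $(j,i)\in G(g)$ yields $\pi(a),\pi(b)\in A^n$ that differ only in coordinate $j$, and the fact that $f_i(\pi(a))\neq f_i(\pi(b))$ forces the strict inequality $\pi(a)_j<\pi(b)_j$, providing a witness for $(j,i)\in G(f)$. For signs: if $(j,i)$ is positive in $G(f)$, then for any $a,b\in B^n$ differing only in coordinate $j$ with $a_j<b_j$, monotonicity of $\pi$ followed by that of $f_i$ gives $g_i(a)\le g_i(b)$, so the arc is positive in $G(g)$; the negative case is symmetric; and a null arc of $G(f)$ keeps two oppositely oriented witnesses in $A^n\subseteq B^n$, so it stays null in $G(g)$.

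The main obstacle is essentially non-existent; what has to be noticed is that the clamping $\pi$ is itself order-preserving, which is precisely what transfers both the arcs and their signs from $G(f)$ to $G(g)$. The only borderline case is the degenerate one $|A^n|=1$ (class $0$), in which case $G$ has no arcs and $A=B$ is forced whenever $G$ is required to admit a function of class $0$, so the statement is trivial.
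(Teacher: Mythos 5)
Your proof is correct and is essentially the paper's own argument: the paper extends $f$ by sending $x\in B^n$ to the point of $A^n$ nearest in Manhattan distance, which is exactly your coordinatewise clamping $\pi$, and then sets $\tilde f(x)=f(\pi(x))$. Your verification of $G(g)=G(f)$ and of the exact class $k$ (via $g^m=f^m\circ\pi$ and restriction to $A^n$) matches, and is in fact spelled out a bit more fully than the paper's ``easily checked''.
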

\begin{proof}
Let $G$ be a signed graph and let $f$ be a nilpotent $G$-function of class $k$ over $A$. For all $x \in B^n$, let $\tilde x$ be the point of $A^n$ that minimizes the Manhattan distance $d(x,\tilde x)=\sum_{i}|x_i-\tilde x_i|$. Let $\tilde f: B^n \to B^n$ be defined by $\tilde f(x) = f(\tilde x)$ for all $x \in B^n$. Then the following three properties are easily checked: $\tilde f$ is a $G$-function; if $f^{k+1} = f^k$ then $\tilde f^{k+1} = \tilde f^k$; and if $f^k$ is a constant, then so is $\tilde f^k$.
\end{proof}

Besides, it is easy to see that every signed digraph $G$ admits a function $f$ over $\{0,1,2\}$. However, some signed digraphs admit no boolean functions (this is a first qualitative difference between boolean and non-boolean alphabets). They are characterized below.   

\begin{proposition}[\cite{PR10}]\label{pro:existence}
A signed digraph $G$ admits a boolean function if and only if $|G^+(i)|+|G^-(i)|\geq 2$ for every vertex $i$ such that $|G^0(i)|=1$. 
\end{proposition}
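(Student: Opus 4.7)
The plan is to prove the two directions separately.

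\emph{Necessity.} Let $f$ be a boolean $G$-function and let $i$ be a vertex with $|G^0(i)|=1$; call $j$ the unique null in-neighbour. I would show that if $|G^+(i)|+|G^-(i)|\in\{0,1\}$ then the arc $(j,i)$ cannot actually be null. When $i$ has no signed in-neighbour, $f_i$ depends only on $x_j$, and the only non-constant boolean functions of one variable are $x_j$ and $\neg x_j$, which give a positive or negative arc. When $i$ has a single signed in-neighbour $k$, $f_i$ is a boolean function of the two variables $x_j,x_k$; a direct enumeration shows that the only such functions that are non-monotone in $x_j$ while depending essentially on it are $x_j\oplus x_k$ and $\neg(x_j\oplus x_k)$, but both are also non-monotone in $x_k$, contradicting the assumption that $(k,i)$ carries a non-zero sign.

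\emph{Sufficiency.} Assuming the condition holds at every vertex, I would build $f$ coordinate by coordinate, each $f_i$ using only the variables indexed by the in-neighbours of $i$. For $k\in G^+(i)\cup G^-(i)$ write $x_k^{\pm}$ for $x_k$ if $k\in G^+(i)$ and for $\neg x_k$ if $k\in G^-(i)$. If $|G^0(i)|=0$ I would take $f_i=\bigwedge_{k\in G^+(i)\cup G^-(i)} x_k^{\pm}$ (a constant if the in-degree is zero). If $|G^0(i)|\geq 2$ I would take
\[
f_i \;=\; \Bigl(\bigoplus_{j\in G^0(i)} x_j\Bigr)\;\land\; \bigwedge_{k\in G^+(i)\cup G^-(i)} x_k^{\pm},
\]
with the empty conjunction read as $1$; the XOR over at least two variables supplies the null arcs and the conjunction of literals yields the signed ones. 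If $|G^0(i)|=1$ with null in-neighbour $j$, then by hypothesis $|G^+(i)|+|G^-(i)|\geq 2$, so I can fix any partition $G^+(i)\cup G^-(i)=S_1\sqcup S_2$ into two non-empty parts, set $A=\bigwedge_{k\in S_1}x_k^{\pm}$ and $B=\bigwedge_{k\in S_2}x_k^{\pm}$, and define
\[
f_i \;=\; (x_j\land A)\;\lor\;(\neg x_j\land B).
\]
In each case I would then perform a routine check that the essential variables of $f_i$ are exactly the in-neighbours of $i$ in $G$ and that the signs of the arcs produced by $f_i$ agree with those prescribed by $G$.

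The main obstacle is the mixed case $|G^0(i)|=1$. A naive XOR construction $f_i=x_j\oplus h(x)$ is tempting but fatally flawed: flipping $x_j$ negates $h$, which reverses the monotonicity of every variable appearing in $h$, so every such variable ends up with a null arc rather than the prescribed sign. The multiplexer formula $(x_j\land A)\lor(\neg x_j\land B)$ is engineered precisely to avoid this: the restriction to $x_j=1$ returns $A$ and the restriction to $x_j=0$ returns $B$, so each $x_k$ occurring only in one of the branches influences $f_i$ monotonically through that branch alone, preserving its sign; and by choosing the variables of $S_1$ to force $A=1$ and those of $S_2$ to force $B=0$, then swapping, one exhibits both an increase and a decrease of $f_i$ along $x_j$, confirming that $(j,i)$ is null. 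This is exactly where the hypothesis $|G^+(i)|+|G^-(i)|\geq 2$ is indispensable, since it is what makes a partition into two non-empty parts $S_1$ and $S_2$ available.
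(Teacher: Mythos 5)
Your argument is correct. Note that the paper itself gives no proof of this proposition: it is quoted from~\cite{PR10}, so there is nothing internal to compare against; your write-up is a valid self-contained substitute. Both directions check out. For necessity, the two-variable enumeration is exactly right: a boolean function depending essentially and non-monotonically on $x_j$, with at most one other essential variable $x_k$, must be $x_j$, $\neg x_j$ (no second variable, hence a signed arc $(j,i)$) or $x_j\oplus x_k$, $\neg(x_j\oplus x_k)$, which are non-monotone in $x_k$ as well, so the single prescribed signed arc $(k,i)$ cannot be realized. For sufficiency, the coordinatewise constructions work: the conjunction of literals when $|G^0(i)|=0$, the XOR-times-conjunction when $|G^0(i)|\geq 2$ (non-monotonicity in each null variable is witnessed by toggling the parity of the remaining null variables, which is where $|G^0(i)|\geq 2$ is used), and the multiplexer $(x_j\land A)\lor(\neg x_j\land B)$ when $|G^0(i)|=1$, whose restrictions to $x_j=1$ and $x_j=0$ isolate $A$ and $B$, preserving the signs of the variables in each branch while the choice $A=1,B=0$ versus $A=0,B=1$ certifies that $(j,i)$ is null; this is exactly where the hypothesis $|G^+(i)|+|G^-(i)|\geq 2$ enters, via the nonempty partition $S_1\sqcup S_2$. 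Your diagnosis of why the naive $x_j\oplus h(x)$ construction fails (it nullifies every variable of $h$) is also accurate and explains the need for the multiplexer. The only routine points left implicit --- essentiality of each $x_k$ obtained by fixing the other literals of its branch to $1$ and choosing $x_j$ to select that branch, and the fact that $G^+(i)$, $G^-(i)$, $G^0(i)$ are pairwise disjoint so the formulas are well defined --- are indeed straightforward.
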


The following proposition shows that a signed digraph $G$ admits a non-boolean nilpotent function if and only if all the initial strong components of $G$ do. In the boolean case, this is no longer true: additional hypotheses on signs are needed. 

\begin{proposition}\label{pro:strong}
Let $A$ be an integer interval and let $G$ be a signed digraph. If $G$ admits a nilpotent function over the alphabet $A$ then all its initial strong components do, and the converse is true if $|A|\geq 3$ or if $|A|=2$ and all the unsigned arcs of $G$ are inside the initial strong components.
\end{proposition}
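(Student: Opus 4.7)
The plan splits the statement into necessity and two cases of the converse.

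\textbf{Necessity.} I start from a nilpotent $G$-function $f$ of class $k$ and an initial strong component $I$. Since no arc of $G$ enters $I$ from outside, each $f_i$ with $i\in I$ depends only on coordinates in $I$; hence $g_i(y)=f_i(x)$, for any $x\in A^n$ with $x_I=y$, unambiguously defines a function $g:A^I\to A^I$. A direct inspection of the definition of interaction graph shows $G(g)=G[I]$, and an easy induction yields $g^k(y)=(f^k(x))_I$, so if $f^k$ is a constant then so is $g^k$.

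\textbf{Sufficiency when $|A|\geq 3$.} I would invoke the main theorem of the non-boolean section, which produces a nilpotent function over $A$ for every signed digraph and needs no hypothesis on initial components. A self-contained alternative is to induct on strong components in topological order: keep the given $f^{(t)}$ on each initial $C_t$, and on each non-initial component exploit the extra alphabet values in $A\setminus\{0,1\}$ to define local functions that collapse to a fixed ``absorbing'' value once their predecessors have stabilised. The freedom provided by $|A|\geq 3$ is precisely what lets one force this collapse while still realising the prescribed local signed interaction.

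\textbf{Sufficiency in the boolean case with unsigned arcs confined to initial components.} I list the strong components $C_1,\dots,C_m$ in a topological order with $C_1,\dots,C_r$ the initial ones, and set $f_i=f^{(t)}_i$ on each initial $C_t$. For a vertex $v$ outside the initial components, every arc into $v$ is signed by hypothesis, so I take
\[
f_v(x)\;=\;\bigwedge_{j\in G^+(v)} x_j\;\wedge\;\bigwedge_{j\in G^-(v)}\neg x_j,
\]
which realises exactly the prescribed signed arcs entering $v$. Nilpotency is then addressed by induction on the topological depth of a non-initial component: after the initial coordinates reach their stabilised values $\vec c^{(t)}$, each $f_v$ becomes a conjunction partially evaluated at these constants, which one argues cascades through the strong component of $v$ and drives it to a constant in finitely many further steps.

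The main obstacle is this final cascade, which succeeds only if the evaluated external factors force $f_v=0$ at some vertex of each non-initial component $C$. To guarantee this I would either tune the $f^{(t)}$'s so that $\vec c^{(t)}$ triggers the cascade, or replace some conjunctions by their dual disjunctions at selected non-initial vertices; combined with the fact that each non-initial $C$ admits at least one vertex receiving a signed arc from outside $C$, this flexibility should close the argument.
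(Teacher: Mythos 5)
Your necessity argument is essentially the paper's: restrict $f$ to an initial strong component and check it is a nilpotent $G[I]$-function. For $|A|\geq 3$, deducing the converse from the theorem of Section~\ref{sec:non-boolean_case} (every signed digraph admits a nilpotent function over $\{0,1,2\}$, hence over any larger interval by Proposition~\ref{pro:alphabet}) is logically sound, since that theorem is proved independently of the present proposition; the paper instead gives a single self-contained extension construction that covers the $|A|\geq 3$ and boolean cases simultaneously.

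The boolean case, however, contains a genuine gap, and it sits exactly where you flag it. Placing the conjunction $\bigwedge_{j\in G^+(v)}x_j\wedge\bigwedge_{j\in G^-(v)}\overline{x_j}$ at every non-initial vertex and arguing that, once the initial components stabilise, each non-initial strong component is ``driven to a constant'' does not work: pinning one vertex of a component does not collapse the component, and the external literals need not pin any vertex at all. For instance, let a non-initial component be a $2$-cycle on $a,b$ with both arcs negative and one external positive arc $(s,a)$ where $s$ stabilises to $1$; then eventually $f_a(x)=\overline{x_b}$ and $f_b(x)=\overline{x_a}$, which oscillates with period two from $(0,0)$. Whether the cascade succeeds depends on whether the stabilised value of a designated in-neighbour makes its literal false (forcing the conjunction to $0$) or true (in which case that vertex must instead get the dual disjunction, so it is forced to $1$), and this choice must be propagated vertex by vertex through the entire non-initial part, not just at vertices receiving arcs from outside their component. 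Specifying that rule is the real content of the paper's proof: it fixes a spanning forest $T$ rooted in the union $I$ of the initial components, defines a target value $\alpha_i\in\{0,1\}$ for each $i\notin I$ by induction along $T$ from the parent's target value and the sign of the tree arc, and assigns the AND-form when $\alpha_i=0$ and the OR-form when $\alpha_i=1$; then the parent's coordinate alone forces $f_i$ to $\alpha_i$ one step after the parent stabilises, and induction on the distance $\rho(i)$ to $I$ gives $f^{r+\max_i\rho(i)}=\cst$ with no analysis whatsoever of the internal dynamics of non-initial strong components. Your closing sentence (``tune the $f^{(t)}$'s or replace some conjunctions by their duals \dots should close the argument'') is precisely this missing construction; note also that the constants reached on the initial components are dictated by whatever nilpotent functions those components admit, so in general you cannot tune them --- the adaptation has to be made at the non-initial vertices, as above.
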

\begin{proof}
Let $f$ be a nilpotent $G$-function over $A$, and suppose that $I$ is an initial strong component of $G$. It is easy to see that if $f$ is a nilpotent $G$-function over $A$, then the ``restriction'' of $f$ to $I$, {\em i.e.} the function $\tilde f:A^I\to A^I$ defined by $\tilde f(x_I)=f(x)_I$ for all $x\in A^n$, is a nilpotent $G[I]$-function. This proves the first assertion. 

\medskip
For the converse, suppose that $A=\{0,1,\dots,s\}$, and let $I$ be the set of vertices that belong to an initial strong component of $G$ (thus $I$ is no longer an initial strong component but the union of the initial strong components). If each initial strong component admits a nilpotent function over $A$, then $G[I]$ admits a nilpotent function $\tilde f$ over $A$. Let $\tilde\alpha\in A^I$ be such that $\tilde f^r=\cst=\tilde\alpha$ for some $r$. We will define a nilpotent $G$-function over $A$ by ``extending''~$\tilde f$.  

\medskip
Let $T$ be a spanning forest of $G$ ({\em i.e} $T$ is both a forest and a spanning subgraph of $G$) such that each root of $T$ belongs to $I$ (to get such a $T$ we can, for instance: consider the graph $G'$ obtained from $G$ by adding a new vertex $v$ and an arc from $v$ to each vertex in $I$; consider the spanning tree $T'$ of $G'$ obtained with a breadth-first-search starting from $v$; and set $T=T'\setminus v$). In this way, every vertex $i\notin I$ has thus a unique in-neighbor in $T$, denoted as $i^*$. For all $i\in[n]$, we denote by $\rho(i)$ the minimal length of a path of $T$ from $I$ to $i$ (thus $\rho(i)=0$ if and only if $i\in I$, and $\rho(i^*)<\rho(i)$ for all $i\not\in I$). Let $\alpha\in A^n$ be defined inductively as follows: for all $i$ such that $\rho(i)=0$, we set $\alpha_i=\tilde\alpha_i$; and for all $i$ with $\rho(i)>0$ we set
\begin{itemize}
\item
$\alpha_i=0$ if $\alpha_{i^*}=0$ and $i^*\in G^+(i)$, or $\alpha_{i^*}>0$ and $i^*\in G^-(i)$, or $\alpha_{i^*}\neq 1$ and $i^*\in G^0(i)$,
\item
$\alpha_i=1$ otherwise.  
\end{itemize}
Consider the function $f:A^n\to A^n$ defined by: 
\begin{itemize}
\item
for all $i\in I$, $f_i(x) =\tilde f_i(x_I)$,
\item
for all $i\notin I$ such that $\alpha_i=0$, 
\begin{equation}\label{eq:def3}
f_i(x) =\big(\bigwedge_{j\in G^+(i)}\One{x_j\geq 1}\big)\wedge\big(\bigwedge_{j\in G^-(i)}\One{x_j=0}\big)\wedge\big(\bigwedge_{j\in G^0(i)}\One{x_j=1}\big)
\end{equation}
\item
for all $i\notin I$ such that $\alpha_i=1$,
\[
f_i(x) =\big(\bigvee_{j\in G^+(i)}\One{x_j\geq 1}\big)\wedge\big(\bigvee_{j\in G^-(i)}\One{x_j=0}\big)\wedge\big(\bigvee_{j\in G^0(i)}\One{x_j\neq 1}\big).
\]  
\end{itemize}
Clearly, $f$ is a $G$-function if $s>1$ or if $s=1$ and all the unsigned arcs are in $G[I]$. Also, it is straightforward to prove, by induction on $\rho(i)$, that $f^{r+\rho(i)+k}_i(x)=\alpha_i$ for all $k\geq 0$, and consequently, $f^{r+p}=\cst=\alpha$, where $p=\max_{i\in[n]}\rho(i)$. This proves the proposition. 
\end{proof}

\begin{remark}\label{rem:initial_1}
If $|A|=2$ the additional condition on unsigned arcs is necessary, because some signed digraphs admit no boolean nilpotent functions while their initial strong components do. This is for instance the case with the following signed digraph $G$. Let $f$ be any boolean $G$-function. 
\[
\begin{tikzpicture}
\node[outer sep=1,inner sep=1,circle,draw,thick] (2) at (0,0){$2$};
\node[outer sep=1,inner sep=1,circle,draw,thick] (1) at (90:1.5){$1$};
\node[outer sep=1,inner sep=1,circle,draw,thick] (3) at (210:1.5){$3$};
\node[outer sep=1,inner sep=1,circle,draw,thick] (4) at (-30:1.5){$4$};
\path[thick]
(1) edge[->] node[inner sep=2,left]{\scriptsize $0$} (2)
(2) edge[->,bend right=20] node[inner sep=2,above]{\scriptsize $+$} (3)
(3) edge[->,bend right=20] node[inner sep=2,below]{\scriptsize $+$} (2)
(2) edge[->,bend right=20] node[inner sep=2,below]{\scriptsize $+$} (4)
(4) edge[->,bend right=20] node[inner sep=2,above]{\scriptsize $+$} (2)
;
\end{tikzpicture}
\]
Since vertex $1$ has no in-neighbor, we have $f_1=\cst=\alpha$ (thus the unique initial component trivially admits a boolean nilpotent function of class one). Also, we have necessarily $f_3(x)=f_4(x)=x_2$. Then, an analysis by cases shows that there are only two possibilities for $f_2$:  
\[
f_2(x)=(\overline{x_1}\land x_3)\lor (x_1\land x_4)
\quad\text{or}\quad
f_2(x)=(\overline{x_1}\land x_4)\lor (x_1\land x_3)
\]  
In the first case, for all $k\geq 2$ we have $f^k_2(x)=f^{k-1}_3(x)=f^{k-2}_2(x)$ if $\alpha=0$, and $f^k_2(x)=f^{k-1}_4(x)=f^{k-2}_2(x)$ otherwise. So $f_2^{k}=f_2^{k-2}$ for all $k\geq 2$, and we arrive to the same conclusion in the second case. Thus for every odd $k$ we have $f^k_2=f_2\neq\cst$. So $G$ admits no boolean nilpotent function. 
\end{remark}

\begin{remark}\label{rem:initial_2}
We deduce from the previous proposition that a digraph $G$ admits a boolean nilpotent function if and only if all its initial strong components do. 
\end{remark}

\section{Non-boolean nilpotent functions on signed digraphs}\label{sec:non-boolean_case}

Over an alphabet of size four, the question of the existence of nilpotent functions and their minimal class is easy.  

\begin{proposition}
Every signed digraph admits a nilpotent function over $\{0,1,2,3\}$ of class at most $2$. 
\end{proposition}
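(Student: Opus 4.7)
The plan is to build an explicit $G$-function $f$ over $A=\{0,1,2,3\}$ whose image lies in the ``absorbing'' subcube $\{1,2\}^n$ and which sends every element of $\{1,2\}^n$ to one and the same point. These two properties immediately give $f^{2}=\cst$, hence nilpotence of class at most~$2$.

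To realise them, for each arc $(j,i)$ of $G$ with sign $s_{ij}\in\{+,-,0\}$ I introduce a Boolean test $t_{ij}:A\to\{0,1\}$ designed to equal $1$ on the middle interval $\{1,2\}$ and to carry the prescribed sign behaviour at the extreme values $0$ and $3$:
\[
t_{ij}(x_j)=
\begin{cases}
\One{x_j\geq 1} & \text{if } s_{ij}=+,\\[2pt]
\One{x_j\leq 2} & \text{if } s_{ij}=-,\\[2pt]
\One{x_j\in\{1,2\}} & \text{if } s_{ij}=0.
\end{cases}
\]
Viewed as a function of $x_j\in A$, $t_{ij}$ is non-decreasing in the first case, non-increasing in the second, and neither in the third (it equals $0$ at both extremes $0,3$ and $1$ at both middle values $1,2$). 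I then set
\[
f_i(x)\;=\;1+\bigwedge_{j\in G(i)} t_{ij}(x_j),
\]
with the convention that the empty conjunction equals $1$, so that $f_i$ is the constant $2$ when $G(i)=\emptyset$. By construction $f_i(x)\in\{1,2\}$, and on $x\in\{1,2\}^n$ every test evaluates to $1$, forcing $f_i(x)=2$. Consequently $f(A^n)\subseteq\{1,2\}^n$ and $f^{2}=\cst=(2,\dots,2)$.

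It remains to check $G(f)=G$. For $j\notin G(i)$ the coordinate $x_j$ does not appear in $f_i$, so no spurious arc is created. For $j\in G(i)$, freezing $x_k=1$ for every other $k\in G(i)$ makes all other tests equal $1$ and reduces $f_i(x)$ to $1+t_{ij}(x_j)$, which yields essential dependence in $x_j$ of exactly the right sign. To see that the prescribed sign extends to all pairs $a,b\in A^n$ with $a_j<b_j$ and $a_k=b_k$ for $k\neq j$, write $T=\bigwedge_{k\in G(i),\,k\neq j}t_{ik}(a_k)\in\{0,1\}$; if $T=0$ then $f_i(a)=f_i(b)=1$, while if $T=1$ the difference $f_i(b)-f_i(a)=t_{ij}(b_j)-t_{ij}(a_j)$ inherits the monotonicity or non-monotonicity pattern of $t_{ij}$, and hence the correct sign (positive, negative, or null).

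The construction is elementary and presents no genuine obstacle; the key design observation is that a four-letter alphabet is precisely what is needed to support a middle band $\{1,2\}$ on which all three flavours of signed indicator can simultaneously be pinned to $1$, while still being essentially dependent at the extreme values $0$ and~$3$. A three-letter alphabet leaves no room for such a band, which is why the analogous statement for $|A|=3$ later in Section~\ref{sec:non-boolean_case} should require a different, weaker bound.
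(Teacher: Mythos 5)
Your proof is correct and follows essentially the same route as the paper's: in both cases $f_i$ is a conjunction of sign-respecting indicator tests of the inputs, chosen so that the image of $f$ is trapped in a two-letter band on which the second iterate is constant. The only difference is cosmetic: the paper pins the tests so that the image lies in $\{0,1\}^n$ and the limit constant $\alpha$ depends on the signs at each vertex, whereas you shift by $1$ to use the band $\{1,2\}^n$, on which every test evaluates to $1$ and the limit is uniformly $(2,\dots,2)$.
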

\begin{proof}
Let $G$ be a signed digraph on $[n]$ and let $f:\{0,1,2,3\}^n\to\{0,1,2,3\}^n$ be defined by:  
$$
f_i(x)=\big(\bigwedge_{j\in G^+(i)}\One{x_j\geq 2}\big)\wedge\big(\bigwedge_{j\in G^-(i)}\One{x_j<2}\big)\wedge\big(\bigwedge_{j\in G^0(i)}\One{x_j=2}\big)
$$
It is easy to check that $f$ is a $G$-function. Let $\alpha\in\{0,1\}^n$ be defined as follows: for all $i\in[n]$, $\alpha_i=0$ if $G^+(i)\cup G^0(i)\neq\emptyset$ and $\alpha_i=1$ otherwise. Then, for all $x\in \{0,1\}^n$, we have  $f(x)=\alpha$, and since $f(x)\in\{0,1\}^n$ for all $x\in\{0,1,2,3\}^n$ we deduce that $f^2(x)=\alpha$. 
\end{proof}

The first interesting case is thus the alphabet with three letters. Let us define a {\em balanced tree} as a signed tree $T$ in which there is at least one signed arc and at least one unsigned arc starting from each inner vertex of $T$. In a rooted tree, the {\em depth} of a vertex is the number of arcs in the path from the root to the vertex (thus the root has depth zero). The depth of a rooted tree is the maximal depth among its vertices. A rooted tree is {\em perfect} is all its leaves have the same depth (thus a perfect binary tree with of depth $d$ has $2^{d+1}-1$ vertices).

\medskip
Below, we prove that every signed digraph admit a nilpotent function over $\{0,1,2\}$ of logarithmic class. The proof is based on a decomposition of $G$ into balanced trees. The construction of the nilpotent $G$-function $f$ follows this decomposition, and the class of $f$ roughly corresponds to the maximum depth of a tree in this decomposition. In the following, logarithms are always in base two.

\begin{theorem}
Every signed digraph admits a nilpotent function over $\{0,1,2\}$ of class at most $\lfloor\log n\rfloor+2$. 
\end{theorem}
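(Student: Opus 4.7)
The plan is to follow the approach hinted at in the preceding paragraph: decompose $G$ into balanced trees of logarithmic depth, then build $f$ coordinate by coordinate along this decomposition. The alphabet $\{0,1,2\}$ is essential because it supports two qualitatively different kinds of indicators---threshold tests such as $\One{x_j\geq 1}$ or $\One{x_j=0}$ for signed arcs, and middle-value tests such as $\One{x_j=1}$ or $\One{x_j\neq 1}$ for unsigned arcs---which together have just enough expressive power to realise an arbitrary signed interaction graph in the spirit of the construction already used in the proof of Proposition~\ref{pro:strong}.

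The first step is a purely combinatorial lemma: every signed digraph on $[n]$ admits a spanning subgraph which is a vertex-disjoint union of balanced trees, each of depth at most $\lfloor\log n\rfloor$. The key observation is that, by definition, every inner vertex of a balanced tree has out-degree at least two in the tree (at least one signed and one unsigned child arc), so the tree is at least binary-branching; and an at-least-binary-branching tree on $n$ vertices is automatically of depth at most $\lfloor\log n\rfloor$. I would prove existence by a greedy/inductive construction: pick a vertex as a root, attach simultaneously one signed and one unsigned out-neighbour whenever both are available in $G$, recurse into the two subtrees, and make the remaining vertices roots of their own (possibly trivial) balanced trees; vertices that cannot be grown at all become singleton trivial trees and will be handled directly.

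Next, I would define a $G$-function $f:\{0,1,2\}^n\to\{0,1,2\}^n$ along this decomposition. A target point $\alpha\in\{0,1\}^n$ is chosen inductively along the trees, following the same recipe as in the proof of Proposition~\ref{pro:strong}. For each vertex $i$, the value $f_i(x)$ is written as a conjunction or disjunction (depending on whether $\alpha_i=0$ or $1$) of indicators of the form $\One{x_j\geq 1}$ or $\One{x_j=0}$ for signed in-neighbours $j$, and $\One{x_j=1}$ or $\One{x_j\neq 1}$ for unsigned in-neighbours $j$; the choice is made so that the required sign of each arc $(j,i)$ is produced (hence $G(f)=G$), and so that $f_i(x)=\alpha_i$ as soon as the tree-parent $i^*$ has stabilised to its predicted value. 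The analysis of the dynamics is then a straightforward induction on the depth inside each tree: after the very first application of $f$, every coordinate of $f(x)$ lies in $\{0,1\}$ because $f_i$ is a Boolean combination of indicators; from then on, each root stabilises to $\alpha_i$ in one further step, and stabilisation propagates down each tree at one level per time step. Since every tree has depth at most $\lfloor\log n\rfloor$, we conclude that $f^{\lfloor\log n\rfloor+2}=\cst=\alpha$.

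The main obstacle is the decomposition lemma, because the balance condition is rigid and has to be reconciled with the arbitrary sign structure of $G$: when unsigned arcs (or signed arcs) are locally scarce, the greedy construction must fall back on singleton trees, and these singletons then need to be glued into the rest of $f$ without increasing the class. A secondary difficulty is a routine but careful case analysis: one must verify that the indicators chosen for $f_i$ produce \emph{exactly} the prescribed sign for every in-neighbour---with only three values there is just enough room to separate the three arc types ($+$, $-$, $0$), but every combination of conjunctions/disjunctions must be checked against the definition of $G(f)$.
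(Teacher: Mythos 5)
There is a genuine gap, and it sits exactly at the point you call the ``key observation.'' A balanced tree is indeed at-least-binary-branching at every inner vertex, but that does \emph{not} bound its depth by $\lfloor\log n\rfloor$: a caterpillar-shaped balanced tree, in which every inner vertex has exactly two children one of which is a leaf (one child arc signed, the other unsigned), has $n$ vertices and depth about $(n-1)/2$. Only \emph{perfect} binary trees satisfy the bound depth $\le\lfloor\log n\rfloor$. Consequently your decomposition lemma is false as stated, and the final count ``depth at most $\lfloor\log n\rfloor$, one level per step'' collapses; at best it gives a linear bound. The paper's proof is built precisely to get around this: it takes maximal balanced trees of \emph{arbitrary} depth, but measures propagation time not by tree depth but by a labeling $\rho$ that is reset to $0$ at a vertex $i$ whenever the arc $(i^-,i)$ and the arc from $i^-$ toward its nearest leaf ($\sigma_{i^-}$) are of different types (one signed, one unsigned); it then proves that any vertex with $\rho(i)=d$ is the leaf of a perfect balanced binary subtree of depth $d$ inside the decomposition, so $2^{d+1}-1\le n$ forces $d+1\le\lfloor\log n\rfloor$. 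Nothing in your proposal substitutes for this argument.

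A second, related gap is the stabilisation of vertices that have no parent inside their own tree. When $G$ is strong it has no sources, so the roots of the initial trees lie on cycles and their coordinate functions depend on genuinely non-constant in-neighbours; moreover roots of later trees are attached to leaves of earlier trees, so without care the stabilisation delays chain up linearly across the sequence of trees. Your claim that after one step all coordinates lie in $\{0,1\}$ and every root then stabilises in one further step does not go through with the indicators you propose ($\One{x_j\ge 1}$, $\One{x_j=0}$, $\One{x_j=1}$, $\One{x_j\neq 1}$ are not constant on $\{0,1\}$). The paper avoids both problems by choosing a maximal disjoint union of cycles $H$ to serve as roots, taking target values $\alpha_i\in\{1,2\}$ with tests $\One{x_j=2}$, $\One{x_j<2}$, $\One{x_j=1}$, $\One{x_j\neq 1}$ so that each parent's image after one step lies in a set on which the child's function is constant, and proving (via the auxiliary set $S$) that the only vertices needing three steps have parents among the roots, so delays never accumulate. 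These design choices are not cosmetic; without them (or an equivalent mechanism) the construction does not yield the logarithmic class, so the proposal as written does not prove the theorem.
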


\begin{proof}
\setcounter{claimcounter}{0}
Let $G$ be a signed digraph on $[n]$. We first consider two special cases. Firstly, the case where $n=1$ is straightforward. Secondly, suppose that $G$ is acyclic and contains a perfect binary tree $T$ of depth $d$ as a spanning subgraph, with the possible addition of a loop on the root $r$. Let $f$ be any $G$-function such that $f^2_r(x)=\cst$. Since $f_r$ is either a constant or depends on $x_r$ only,  such a $G$-function exists by the first case. Let $\rho(i)$ the depth of each vertex $i$ in~$T$. By a straightforward induction on $\rho(i)$, we have $f_i^{\rho(i)+2}(x) = \cst$ for all $i\in[n]$. Thus $f$ is a nilpotent $G$-function over $\{0,1,2\}$ of class at most $\max_{i\in[n]}\rho(i)+2=d+2 \leq  \lfloor \log n \rfloor + 2$.

\medskip
We assume that $G$ does not fall in either of the special cases treated above. The proof is a construction involving the following objects.
\begin{enumerate}
\item 
Let $H$ be a maximal subgraph of $G$ that consists in a union of disjoint cycles, let $G'$ be the acyclic subgraph of $G$ obtained by removing all the arcs $(i,j)$ with $j\in V(H)$, and let $R$ be the set of sources of $G'$ (that is, the sources of $G$ plus the vertices of $H$). 
\item
Let $r_1,\dots,r_p$ be an enumeration of $R$, and let $T_1,\dots,T_p$ be a sequence of balanced trees constructed in the following way: 
\begin{itemize}
\item
$T_1$ is a maximal balanced tree with root $r_1$ contained in $G'$, 
\item
for $1< k \leq p$, $T_k$ is a maximal balanced tree with root $r_k$ contained in $G'\setminus (T_1\cup T_2\cup\dots\cup T_{k-1})$. 
\end{itemize}
If $G$ is not spanned by $T_1\cup\dots\cup T_p$, then let $T_{p+1},\dots,T_q$ be an additional sequence of be a balanced trees such that:
\begin{itemize}
\item
for $p<k \leq q$, $T_k$ is a maximal balanced tree contained in $G'\setminus (T_1\cup T_2\cup\dots\cup T_{k-1})$ such that $G'$ has an arc from a vertex $\ell_k$ in $T_1\cup T_2\cup\dots\cup T_{k-1}$ to the root $r_k$ of $T_k$,
\item
$G$ is spanned by $T=T_1\cup\dots\cup T_p\cup T_{p+1}\cup \dots\cup T_q$.
\end{itemize}
See Figure~\ref{fig} for an illustration. 
\begin{figure}
\[
\begin{tikzpicture}[scale=0.8]
\node[outer sep=1,inner sep=1,circle,draw,thick] (r1) at (4,6){$r_1$};
\node[outer sep=1,inner sep=1,circle,draw,thick] (r2) at (8,6){$r_2$};
\node[outer sep=1,inner sep=1,circle,draw,thick] (r3) at (12,6){$r_3$};
\node (1) at (2,4){$\bullet$};
\node (2) at (6,4){$\bullet$};
\node (3) at (10,4){$\bullet$};
\node (4) at (12,4){$\bullet$};
\node (5) at (14,4){$\bullet$};
%
\node (7) at (2,2){$\bullet$};
\node (8) at (4,2){$\bullet$};
\node[outer sep=1,inner sep=1,circle,draw,thick] (r6) at (8,2){$r_6$};
\node[outer sep=1,inner sep=1,circle,draw,thick] (r7) at (12,2){$r_7$};
\node[outer sep=1,inner sep=1,circle,draw,thick] (r5) at (2,0){$r_5$};
\node (9) at (6,0){$\bullet$};
\node (10) at (10,0){$\bullet$};
\node (11) at (12,0){$\bullet$};
\node[outer sep=1,inner sep=1,circle,draw,thick] (r8) at (14,0){$r_8$};

\path
(r1) edge[bend right=15,->] node[inner sep=2,below]{\scriptsize $+$} (r2)
(r2) edge[bend right=15,->] node[inner sep=2,above]{\scriptsize $-$} (r1)
(r1) edge[ultra thick,->] node[inner sep=4,left]{\scriptsize $+$} (1)
(r1) edge[ultra thick,->] node[inner sep=4,left]{\scriptsize $0$} (2)
(r2) edge[->] node[inner sep=4,left]{\scriptsize $0$} (2)
(r2) edge[->] node[inner sep=4,left]{\scriptsize $-$} (3)
(r3) edge[ultra thick,->] node[inner sep=4,left]{\scriptsize $0$} (3)
(r3) edge[ultra thick,->] node[inner sep=4,left]{\scriptsize $+$} (4)
(r3) edge[ultra thick,->] node[inner sep=4,left]{\scriptsize $-$} (5)
%
(1) edge[ultra thick,->] node[inner sep=4,left]{\scriptsize $+$} (7)
(1) edge[ultra thick,->] node[inner sep=4,left]{\scriptsize $0$} (8)
(2) edge[->] node[inner sep=4,left]{\scriptsize $-$} (8)
(2) edge[->] node[inner sep=4,left]{\scriptsize $+$} (9)
(2) edge[->] node[inner sep=4,left]{\scriptsize $+$} (r6)
(3) edge[->] node[inner sep=4,left]{\scriptsize $+$} (r6)
(3) edge[->] node[inner sep=4,left]{\scriptsize $+$} (r7)
(4) edge[->] node[inner sep=4,left]{\scriptsize $-$} (r7)
(5) edge[->] node[inner sep=4,left]{\scriptsize $-$} (r7)
(5) edge[->] node[inner sep=4,left]{\scriptsize $+$} (r8)
%
%
(7) edge[->] node[inner sep=4,left]{\scriptsize $+$} (r5)
(8) edge[->] node[inner sep=4,left]{\scriptsize $-$} (r5)
(8) edge[->] node[inner sep=4,left]{\scriptsize $+$} (9)
(r6) edge[ultra thick,->] node[inner sep=4,left]{\scriptsize $-$} (9)
(r6) edge[ultra thick,->] node[inner sep=4,left]{\scriptsize $0$} (10)
(r7) edge[->] node[inner sep=4,left]{\scriptsize $0$} (10)
(r7) edge[->] node[inner sep=4,left]{\scriptsize $+$} (11)
(r7) edge[->] node[inner sep=4,left]{\scriptsize $+$} (r8)
(9) edge[->] node[inner sep=4,below]{\scriptsize $+$} (r5)
(10) edge[bend left,->] node[inner sep=4,below]{\scriptsize $+$} (r5)
%
(11) edge[->] node[inner sep=4,above]{\scriptsize $-$} (r8)
(r8) edge[bend left,->] node[inner sep=4,below]{\scriptsize $-$} (10)
(r5) edge[bend left=100,->] node[inner sep=4,left]{\scriptsize $-$} (r1)
;
\end{tikzpicture}
\qquad\quad
\]
{\caption{\label{fig} An example of decomposition $T$. The thick arcs are those that belong to $T$. $H$ consists in the cycle of length two between $r_1$ and $r_2$, and $r_3$ is the unique source, so that $R=\{r_1,r_2,r_3\}$. Thus $p=3$ and $q=8$. The balanced trees $T_2$, $T_5$, $T_7$ and $T_8$ are reduced to a single vertex.}}
\end{figure}
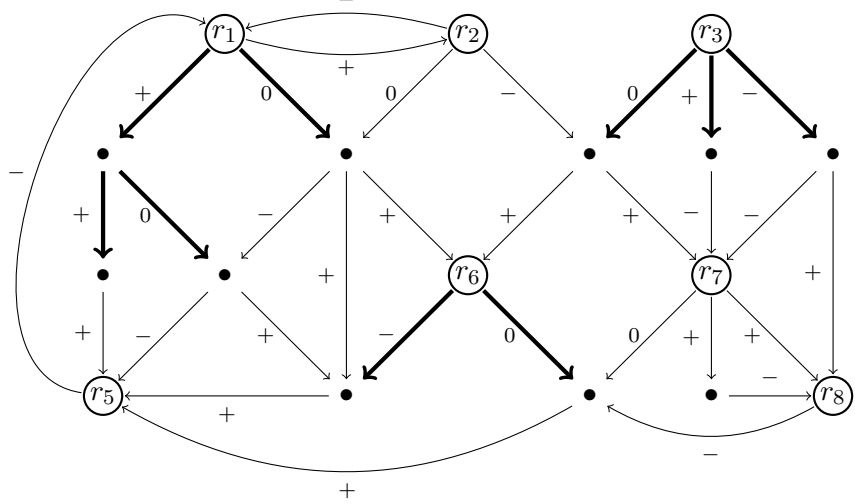
\item
Every vertex $i$ that is not a source of $T$ has a unique in-neighbor in $T$ that we denote as~$i^-$. Also, every vertex $i$ in $H$ has a unique in-neighbor, that we denote as $i^-$, and a unique out-neighbor, that we denote as $i^+$. If $i=r_k$ for some $p< k\leq q$, then we set $i^-=\ell_k$. In this way, $i^-$ is well defined for every vertex $i$ of $G$ that is not a source of $G$. 
\item
For every vertex $i$, let $P_i$ be a path of $T$ of minimal length from $i$ to a leaf of $T$, and let $\sigma_i$ be the out-neighbor of $i$ in $P_i$.
\item
For all $k\geq 0$, we denote by $M_k$ the set of vertices $i$ with depth $k$ in $T$. Thus, $M_0$ contains exactly the roots $r_1,\dots,r_q$, and if $i\in M_k$ with $k>0$ then $i^-\in M_{k-1}$. We define the labeling function $\rho:V(T)\to \N$ as follows:
\begin{itemize}
\item
for all $i\in M_0$, $\rho(i)=0$, 
\item
for all $k>0$ and $i\in M_k$, if $(i^-,i)$ and $(i^-,\sigma_{i^-})$ are both unsigned or both signed then $\rho(i)=\rho(i^-)+1$, and otherwise $\rho(i)=0$. 
\end{itemize}
\item
Let $\alpha\in \{1,2\}^n$ be defined by:
\begin{itemize}
\item
if $i$ is a source of $G$ then $\alpha_i=1$, 
\item
if $i$ is a vertex of $H$ then $\alpha_i=2$ if and only if $(i,i^+)$ is unsigned,
\item
if $i$ is a leaf of $T\setminus R$ then $\alpha_i=2$ if and only if $i=\ell_k\in G^0(r_k)$ for some $p< k\leq q$,
\item 
if $i$ is not a leaf of $T\setminus R$ then $\alpha_i=2$ if and only if $T$ has a signed arc $(i,j)$ with~$\rho(i)<\rho(j)$.
\end{itemize}
\item
Let $S$ be the set of vertices $i$ with the following properties: $\rho(i)=0$, $i$ is not a source of $G$, and either $(i^-,i)$ is signed and $\alpha_{i^-}=2$ or $(i^-,i)$ is unsigned and $\alpha_{i^-}=1$. Note that $S\cap R=\emptyset$.
\item
Note that if $\rho(i)>0$ then $\rho(i)=\rho(i^-)+1$ and this property allows to define $\beta\in\{0,1\}^n$ inductively as follows: 
\begin{itemize}
\item
if $\rho(i)=0$ and $i\not\in S$ then $\beta_i=0$ if and only if $i$ is a source of $G$ or $(i^-,i)$ is positive,
\item 
if $\rho(i)>0$ or $\rho(i)=0$ and $i\in S$, then $\beta_i=0$ if and only if one of the following conditions is satisfied: $(i^-,i)$ is positive and $\alpha_{i^-}\beta_{i^-} <2$; $(i^-,i)$ is negative and $\alpha_{i^-}\beta_{i^-} =2$; $(i^-,i)$ is unsigned and $\alpha_{i^-}\beta_{i^-}\neq 1$.
\end{itemize}
\item
Finally, let $f:\{0,1,2\}^n\to\{0,1,2\}^n$ be defined as follows: for all $i\in [n]$, 
\begin{itemize}
\item
if $i$ is a source of $G$ then $f_i(x)=0$,
\item 
if $i$ is not a source of $G$ and $\beta_i=0$ then 
$$
f_i(x)=\alpha_i\left[\big(\bigwedge_{j\in G^+(i)}\One{x_j=2}\big)\wedge\big(\bigwedge_{j\in G^-(i)}\One{x_j<2}\big)\wedge\big(\bigwedge_{j\in G^0(i)}\One{x_j=1}\big)\right],
$$
\item
if $i$ is not a source of $G$ and $\beta_i=1$ then
$$
f_i(x)=\alpha_i\left[\big(\bigvee_{j\in G^+(i)}\One{x_j= 2}\big)\vee\big(\bigvee_{j\in G^-(i)}\One{x_j<2}\big)\vee\big(\bigvee_{j\in G^0(i)}\One{x_j\neq1}\big)\right].
$$
\end{itemize}
\end{enumerate}

It is easy to see that $f$ is a $G$-function, and we will prove that $f$ is a nilpotent function of class at most $\lfloor \log n\rfloor+2$.

\begin{claim}
$f$ is a nilpotent function of class at most $\max_{i\in [n]}\rho(i) + 3$. 
\end{claim}
\begin{subproof}
Actually, we will prove that for all $i\in [n]$, $x\in\{0,1,2\}^n$ and $k\in\N$,  
\begin{equation*}
\label{eq:main}
f^{\rho(i)+3+k}_i(x)=\alpha_i\beta_i.
\end{equation*}
We proceed by induction on $\rho(i)$. 

\medskip
{\em Case $\rho(i)=0$ and $i \notin S$.} We prove $f^{2+k}_i(x)=\alpha_i\beta_i$ which is stronger. If $i$ is a source of $G$ then $\beta_i=0$ and $f_i(x)=0=\alpha_i\beta_i$ for all $x$. If $i$ is not a source of $G$ then we have three cases:
\begin{enumerate}
\item
If $(i^-,i)$ is unsigned then $\beta_i=1$ and since $i\not\in S$, $\alpha_{i^-}=2$. Thus $f_{i^-}(x)\in \{0,2\}$ and $x_{i^-}\neq 1\Rightarrow f_i(x)=\alpha_i$ thus $f^{2+k}_i(x)=\alpha_i\beta_i$. 
\item
If $(i^-,i)$ is positive then $\beta_i=0$ and since $i\not\in S$, $\alpha_{i^-}=1$. Thus $f_{i^-}(x)\in \{0,1\}$ and $x_{i^-}<2\Rightarrow f_i(x)=0$, and we deduce that $f^{2+k}_i(x)=0=\alpha_i\beta_i$.
\item
If $(i^-,i)$ is negative then $\beta_i=1$ and since $i\not\in S$, $\alpha_{i^-}=1$. Thus $f_{i^-}(x)\in \{0,1\}$ and $x_{i^-}<2\Rightarrow f_i(x)=\alpha_i$ thus $f^{2+k}_i(x)=\alpha_i\beta_i$.
\end{enumerate}
In any case, $f_i^{2+k}(x) = \alpha_i \beta_i$. 

\medskip
{\em Case $i \in S$.} Note that $\rho(i) = 0$. Let us prove that $i^-\in R$. Suppose, for a contradiction, that $i^-\not\in R$. Then we consider two cases:
\begin{enumerate}
\item
Suppose that $i^-$ is a leaf of $T\setminus R$. Then $(i^-,i)=(\ell_k,r_k)$ for some $p<k\leq q$. If $(i^-,i)$ is unsigned then, by the definition of $\alpha$, we have $\alpha_{i^-}=2$, and this not possible since $i\in S$. If $(i^-,i)$ is signed we have $\alpha_{i^-}=2$ (by the fact that $i\in S$), and according to the definition of $\alpha$, $i^-=\ell_{k'}$ for some $p<k'\leq q$ such that $(\ell_{k'},r_{k'})$ is unsigned. We deduce that $i^-=\ell_k=\ell_{k'}$ is a leaf of a tree $T_m$ with $m<\min(k,k')$. But then, by adding to $T_m$ the signed arc $(i^-,r_k)$ and the unsigned arc $(i^-,r_{k'})$ we obtain a balanced tree $T'_m$ contained in $G'\setminus (T_1\cup T_2\cup\dots\cup T_{m-1})$ which is greater than $T_m$, another contradiction. 
\item
Suppose that $i^-$ is not leaf of a $T\setminus R$. Then $(i^-,i)$ is an arc of $T$. If $(i^-,i)$ is unsigned, then $T$ has a signed arc $(i^-,j)$, and since $\rho(i)=0$, we deduce that $\rho(i^-)<\rho(j)$. Thus $\alpha_{i^-}=2$, and this is not possible since $i\in S$. Similarly, if $(i^-,i)$ is signed then for every signed arc $(i^-,j)$ of $T$ we have $\rho(j)=\rho(i)=0$, and we deduce that $\alpha_i=1$, and this is not possible since $i\in S$. 
\end{enumerate}
Thus $i^-\in R$ hence $\rho(i^-) = 0$ and $i^- \notin S$, thus following the previous case,
\begin{equation}\label{eq:induction}
f_{i^-}^{2+k}(x) = \alpha_{i^-} \beta_{i^-}.
\end{equation}
Suppose that $(i^-,i)$ is positive. 
\begin{enumerate}
\item
If $\alpha_{i^-}\beta_{i^-} <2$ then $\beta_i=0$ thus $x_{i^-}<2\Rightarrow f_i(x)=0$. We deduce from (\ref{eq:induction}) that 
$$
f^{3 + k}_i(x)=0 = \alpha_i \beta_i.
$$
\item
If $\alpha_{i^-}\beta_{i^-}=2$ then $\beta_i=1$ thus $x_{i^-}=2\Rightarrow f_i(x)=\alpha_i$. We deduce from (\ref{eq:induction}) that 
$$
f^{3+k}_i(x)=\alpha_i = \alpha_i \beta_i.
$$
\end{enumerate}
If $(i^-,i)$ is negative or unsigned we prove with similar arguments that $f^{3+k}_i(x)=\alpha_i\beta_i$.

\medskip
{\em Case $\rho(i)>0$.} We then have $\rho(i^-)=\rho(i)-1$. Following the induction hypothesis we have  $f_{i^-}^{\rho(i) + 2 +k}(x) = f^{\rho(i^-)+ 2 +k}_{i^-}(x)=\alpha_{i^-}\beta_{i^-}$ and to complete the induction step, we proceed as above, proceeding by case on the sign of $(i^-,i)$. 
\end{subproof}

It remains to prove the following claim. 

\begin{claim}
$\max_{i\in[n]} \rho(i) + 1\leq\lfloor \log n\rfloor$. 
\end{claim}
\begin{subproof}
We need the following:
\begin{equation}
\label{eq:perfect}
\text{Every vertex $i$ is the leaf of a perfect balanced binary tree of depth $\rho(i)$ contained in $T$.}
\end{equation}
We prove this by induction on $\rho(i)$. This is obvious if $\rho(i)=0$. Suppose that $\rho(i)>0$ and consider the path $i_0,i_1,\dots,i_d$ of $T$ such that $i_d=i$ and $\rho(i_k)=k$ for all $0\leq k\leq d$. By induction hypothesis, $i_{d-1}$ is the leaf of a perfect balanced binary tree $B$ of depth $d-1$ contained in $T$ (and $i_0$ is the root of $B$). Let $L$ be the set of leaves of $B$ and let $L'$ be the set of leaves of $B$ that are also leaves of $T$ (note that $i_{d-1}\in L\setminus L'$). 
\begin{enumerate}
\item
Suppose that $L'=\emptyset$. Then for each $j\in L$, $T$ contains two arcs starting from $j$, say $a_j$ and $b_j$, such that $a_j$ and $b_j$ are not both signed or both unsigned; and we can choose $a_{i_{d-1}}$ to be the arc from $i_{d-1}$ to $i_d$. Let $B'$ be the tree obtained by adding the $2|L|$ arcs $a_j,b_j$. Then $B'$ is a perfect balanced binary tree of depth $d$, and $i_d$ is a leaf  of $B'$. Thus (\ref{eq:perfect}) holds. 
\item
Suppose that $L'\neq\emptyset$. Then there exists a greatest index $0\leq t<d-1$ such that $B$ contains a path from $i_t$ to $L'$. Then $P_{i_t}$ is necessarily a path from $i_t$ to $L'$. Thus $i_{t+1}$ is not in this path, so $(i_t,\sigma_{i_t})$ and $(i_t,i_{t+1})$ cannot be both signed or both unsigned and we deduce that $\rho(i_{t+1})=0$, a contradiction. This proves (\ref{eq:perfect}). 
\end{enumerate}
We are now in position to prove the claim. Let $i$ be such that $\rho(i)$ is maximal. According to~(\ref{eq:perfect}), $i$ is the leaf of a perfect balanced binary tree $B\subseteq T$ of depth $\rho(i)$. Thus $B$ has $m=2^{\rho(i)+1}-1$ vertices, so $\rho(i)+1 \leq\lfloor\log n\rfloor$ unless $n = m$. In that case, $B$ is a spanning subgraph of $G$, hence $|R| = 1$, thus $G$ falls in the second special case treated at the beginning of the proof. 
\end{subproof}  
\end{proof}  

The following proposition shows that the bound $\lfloor \log n\rfloor+2$ is tight. 

\begin{proposition}
For every $n\geq 1$, there exists a signed digraph $G$ on $[n]$ such that every nilpotent $G$-function over $\{0,1,2\}$ is of class at least $\lfloor \log n\rfloor+2$. Furthermore, if $n=2^{\lfloor \log n\rfloor+1}-1$, then there exists a strong signed digraph with this property.
\end{proposition}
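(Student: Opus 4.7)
Set $d=\lfloor\log n\rfloor$. The goal is to exhibit, for each $n\geq 1$, a signed digraph $G$ on $[n]$ forcing every nilpotent $G$-function over $\{0,1,2\}$ to have class at least $d+2$, and to strengthen this to a strong $G$ when $n=2^{d+1}-1$.

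The base building block is a single vertex $r$ carrying an unsigned ($0$) loop: by a direct case analysis of nilpotent one-variable functions on $\{0,1,2\}$, such a loop forces any nilpotent $G$-function $f$ to have $f_r\in\{(0,2,0),(2,0,2)\}$, both of class exactly $2$ with the ``extremal'' image $\{0,2\}$. The construction of $G$ attaches to $r$ a depth-$d$ binary tree $T$ with arcs directed away from $r$, with signs on the tree arcs (and possibly auxiliary internal arcs) chosen so that the extremal image $\{0,2\}$ is forced to propagate all the way down to every leaf. If $n>2^{d+1}-1$ the extra vertices are appended as pendants to the leaves so that $T\cup\{\text{loop on }r\}$ remains an initial strong component (and hence controls the class by Proposition~\ref{pro:strong}); if $n<2^{d+1}-1$ we use a depth-$d$ non-perfect subtree on $n$ vertices.

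The lower bound is established by induction on the depth $k\in\{0,\dots,d\}$, showing that for every nilpotent $G$-function $f$ and every vertex $v$ at depth $k$ in $T$, the iterate $f^{k+1}_v$ is non-constant as a function of $x_r$. The key tool is that any signed (monotone) non-constant function on $\{0,1,2\}$ is injective on the pair $\{0,2\}$, so provided the extremal image is preserved at each level, the composition of the one-variable restrictions along a root-to-leaf path is non-constant. Taking $k=d$ and $v$ a leaf gives $f^{d+1}\ne\cst$, hence $\mathrm{class}(f)\geq d+2$.

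For the strong case $n=2^{d+1}-1$, the loop on $r$ is replaced by back-arcs from each leaf to $r$ with coordinated signs, so that after the leaves have stabilised at time $d$ the effective one-variable function at $r$ is still of class $2$ with image $\{0,2\}$; the inductive propagation argument then proceeds as before. The main obstacle throughout is the inductive step: ensuring the extremal invariant $\{0,2\}$ survives through every level of $T$ requires choosing the tree-arc signs carefully (for example, coordinating opposite signs at sibling arcs, or inserting auxiliary arcs within $T$) so that no admissible choice of $f$ can collapse the image to a non-extremal pair like $\{0,1\}$ or $\{1,2\}$ at some intermediate depth, after which a monotone function could render the whole composition constant. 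Designing a sign pattern that robustly enforces this invariant against every possible $G$-function is the technical heart of the proof.
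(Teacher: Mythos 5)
Your proposal leaves its central step unresolved, and that step is not merely ``technical'': in the form you state it, it cannot be carried out. You want the image $\{0,2\}$ to survive level by level, so that the composition of the one-variable local maps along a fixed root-to-leaf path stays non-constant. Against an adversarially chosen $G$-function this fails after two arcs, whatever signs you put on the path (and auxiliary arcs only add constraints of the same monotone/non-monotone kind). Writing a one-variable map as the vector $(g(0),g(1),g(2))$: a signed (monotone, non-constant) map always separates $0$ from $2$, but may identify the pair $\{0,1\}$ or $\{1,2\}$ (take $(0,0,1)$ or $(0,1,1)$); an unsigned (non-monotone) map always separates $\{0,1\}$ and $\{1,2\}$, but may identify $\{0,2\}$ (take $(0,2,0)$). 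Moreover, whenever a local map does preserve a two-element image, the adversary can send it to \emph{any} pair it likes (e.g.\ from $\{0,2\}$ via a signed map: $(0,0,1)$, $(0,1,2)$, $(1,1,2)$ realise all three pairs; similarly from $\{0,1\}$ or $\{1,2\}$ via a non-monotone map). Hence after one surviving step the adversary steers the image to $\{0,1\}$ if the next arc is signed and to $\{0,2\}$ if it is unsigned, and kills the composition there. So no sign pattern can enforce your invariant along a single path, and the lower bound cannot be proved ``path by path''; your correct observation that an unsigned loop forces $f_r\in\{(0,2,0),(2,0,2)\}$ does not rescue the induction.

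The paper's proof resolves exactly this point by branching rather than by an invariant: it uses a perfect \emph{balanced} binary tree, in which every inner vertex $i$ has one signed out-arc $(i,j)$ and one unsigned out-arc $(i,\ell)$. Writing $k_i$ for the least $k$ with $f^{k}_i=\cst$, whichever two-element set lies in the image of $f^{k_i-1}_i$, at least one of the two children separates it, giving $\max(k_j,k_\ell)=k_i+1$; non-constancy thus propagates along some $f$-dependent path, and at the root one only needs $k_r\geq 2$, which is immediate once arcs from every leaf back to the root are added (this also yields the strong digraph for $n=2^{\lfloor\log n\rfloor+1}-1$). Your treatment of the remaining values of $n$ also needs repair: the case $n>2^{d+1}-1$ never occurs since $d=\lfloor\log n\rfloor$, and for $n<2^{d+1}-1$ a depth-$d$ tree cannot branch at every inner vertex (that already requires $2^{d+1}-1$ vertices), so a ``non-perfect depth-$d$ subtree'' cannot support the argument. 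The paper instead takes a perfect balanced tree of depth $d-1$ plus one extra vertex $w$ with a positive loop and unsigned arcs to all other vertices; nilpotency forces $\mathrm{Im}(f_w)$ into $\{0,1\}$ or $\{1,2\}$, and since $f_r$ is non-monotone in $x_w$ this yields $k_r\geq 3$, recovering the missing unit of depth.
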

\begin{proof}
Let $d=\lfloor \log n\rfloor$, and suppose first that $n=2^{d+1}-1$. Let $T$ be a perfect balanced binary tree of depth $d$ with vertex set $[n]$. Let $T'$ be the strong signed digraph obtained from $T$ by adding an arc (of any sign) from every leaf to the root $r$ of $T$. Let $f$ be a nilpotent $T'$-function over $\{0,1,2\}$ and for each $i\in [n]$, let $k_i$ be the smallest integer such that $f^{k_i}_i=\cst$. Let $i$ be a non-leaf of $T$, and let $(i,j)$ and $(i,\ell)$ be the signed and unsigned arcs starting from $i$. Since $f_j(x)$ and $f_\ell(x)$ only depend on $x_i$, we abusively write $f_j(x_i)$ and $f_\ell(x_i)$. Obviously, we have $\max(k_j,k_\ell)\leq k_i+1$, but actually 
\begin{equation}\label{eq:max}
\max(k_j,k_\ell)=k_i+1.
\end{equation}
Indeed, since $(i,j)$ is signed, $f_j$ is monotonous thus $f_j(0)\neq f_j(2)$. It means that if $\{0,2\}$ is a subset of the image of $f^{k_i-1}_i$, then $f^{k_i}_j$ is not a constant thus $k_j=k_i+1$. Then, since $(i,\ell)$ is unsigned, $f_\ell$ is non-monotonous thus $f_\ell(0)\neq f_\ell(1)$ and $f_\ell(1)\neq f_\ell(2)$. It means that if $\{0,1\}$ or $\{1,2\}$ is a subset of the image of $f^{k_i-1}_i$ then $k_\ell=k_i+1$. Since $f^{k_i-1}_i$ is not a constant, either $\{0,1\}$ or $\{1,2\}$ or $\{0,2\}$ is a subset of $\text{Im}(f^{k_i-1}_i)$ thus $k_j=k_i+1$ or $k_\ell=k_i+1$ and (\ref{eq:max}) follows. Then, we deduce that there exists a leaf $\ell$ of $T$ with $k_{\ell}=k_r+d$. Since $f_r$ is not a constant, we have $k_r\geq 2$ thus $f$ is of class at least $d+2$. 

\medskip
Suppose now that $n<2^{d+1}-1$. Let $T$ be a perfect balanced binary tree of depth $d-1$ with root~$r$. Let $T'$ be the signed digraph obtained from $T$ by 
\begin{enumerate}
\item
adding a new vertex $w$ and $p=n-2^d$ other vertices $v_1,\dots,v_p$ (so that $T'$ has $n$ vertices);  
\item
adding a positive loop on $w$ and an unsigned arc from $w$ to every vertex of $T$ and every vertex $v_q$, $1\leq q\leq p$.
\end{enumerate}
Let $f$ be a nilpotent $T'$-function, and for each vertex $i$, let $k_i$ be as previously. Using the arguments above, we show that there exists a leaf $\ell$ of $T$ such that $k_\ell=k_r+d-1$. It is thus sufficient to prove that $k_r\geq 3$. Since $f_w$ only depends on $x_w$ and is non-decreasing with $x_w$, if $\{0,2\}\subseteq \text{Im}(f_w)$ then we have  $f_w(0)=0$ and $f_w(2)=2$ thus $f$ is not nilpotent. Thus $\{0,1\}$ or $\{1,2\}$ is a superset of $\text{Im}(f_w)$. Since $f_r$ is a non-monotonous function of $x_w$, we have $f_r(0)\neq f_r(1)$ and $f_r(1)\neq f_r(2)$ thus $f^2_r$ is not a constant and we deduce that $k_r\geq 3$. 
\end{proof}

In the above constructions showing that the bound $\lfloor \log n\rfloor+2$ is tight, some vertices have a unique unsigned predecessor, while some others have no unsigned predecessor. The following proposition shows that if at least one of this two conditions fails, then there exists a nilpotent function of class 2, as in the four letters case. In particular, if all arcs are labelled positively or negatively, then there exists a nilpotent function of class 2.  
    
\begin{proposition}
Let $G$ be a signed digraph. If $G$ has no vertices with a unique unsigned predecessor, or if all the vertices of $G$ have at least one unsigned predecessor, then $G$ admits a nilpotent function over $\{0,1,2\}$ of class at most $2$.
\end{proposition}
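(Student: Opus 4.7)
My plan is to treat the two hypotheses separately and, in each case, exhibit an explicit $G$-function $f:\{0,1,2\}^n\to\{0,1,2\}^n$ whose image is contained in a set $S\subseteq\{0,1,2\}^n$ on which each coordinate function $f_i$ is already constant. This immediately gives $f^2=\cst$ and hence class at most $2$.

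Suppose first that no vertex of $G$ has a unique unsigned predecessor, so that $|G^0(i)|\neq 1$ for every $i$. I target $S=\{0,1\}^n$ and set, when $G^0(i)\neq\emptyset$,
$$f_i(x)=\bigwedge_{j\in G^+(i)}\One{x_j=2}\wedge\bigwedge_{j\in G^-(i)}\One{x_j<2}\wedge\bigoplus_{j\in G^0(i)}\One{x_j=2},$$
while if $G^0(i)=\emptyset$ I drop the $\bigoplus$ term (which would otherwise be $0$ by convention and annihilate the signed part). Then $f_i(x)\in\{0,1\}$, so $f(\{0,1,2\}^n)\subseteq\{0,1\}^n$; on $\{0,1\}^n$ every $\One{x_j=2}$ is $0$ and every $\One{x_j<2}$ is $1$, so each $f_i$ becomes a fixed constant in $\{0,1\}$, giving $f^2=\cst$.

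Suppose now that every vertex of $G$ has at least one unsigned predecessor. For each $i$ pick some $j_0(i)\in G^0(i)$, target $S=\{0,2\}^n$ and set
$$f_i(x)=2\cdot\Big[\One{x_{j_0(i)}=1}\wedge\bigwedge_{j\in G^+(i)}\One{x_j=2}\wedge\bigwedge_{j\in G^-(i)}\One{x_j<2}\wedge\bigwedge_{j\in G^0(i)\setminus\{j_0(i)\}}\One{x_j=1}\Big].$$
Now $f_i(x)\in\{0,2\}$, so $f(\{0,1,2\}^n)\subseteq\{0,2\}^n$; on $\{0,2\}^n$ the gate $\One{y_{j_0(i)}=1}$ is $0$, forcing $f_i\equiv 0$ and hence $f^2\equiv 0$.

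The main obstacle I expect is to certify that $G(f)=G$ with the correct signs. For every $j$, I plan to fix the remaining coordinates so that all residual conjuncts evaluate to $1$, which is always possible because $G^+(i),G^-(i),G^0(i)$ are pairwise disjoint, so the constraints they place on distinct coordinates do not conflict. The one-variable profile of $f_i$ in $x_j$ then reduces to $(0,0,1)$ or $(1,1,0)$ for signed $j$ in Case 1 (respectively $(0,0,2)$ or $(2,2,0)$ in Case 2), certifying the correct monotone sign, and to the non-monotone profile $(0,2,0)$ for $j\in G^0(i)$ in Case 2. The truly delicate point is Case 1 with $j\in G^0(i)$: here I use the assumption $|G^0(i)\setminus\{j\}|\geq 1$ to toggle the residual XOR parity $D\in\{0,1\}$, obtaining the two one-variable profiles $(0,0,1)$ (when $D=0$) and $(1,1,0)$ (when $D=1$); these jointly witness both a strict increase and a strict decrease in $x_j$, so the arc $(j,i)$ is indeed unsigned. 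Finally, there are no spurious arcs since each formula only reads coordinates $x_j$ with $j\in G(i)$.
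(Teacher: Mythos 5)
Your proposal is correct and follows essentially the same route as the paper: in each case you build an explicit $G$-function whose image lies in a two-letter sub-alphabet ($\{0,1\}^n$ or $\{0,2\}^n$) on which every coordinate is constant, using the XOR over $G^0(i)$ (legitimate since $|G^0(i)|\neq 1$) to realize the null signs in the first case; your second-case function is literally the paper's. The only cosmetic difference is that in the first case you take a conjunction where the paper takes a disjunction (which is why you need the special handling of $G^0(i)=\emptyset$ that the paper's $\lor$-form avoids), and you spell out the sign-verification that the paper leaves as a remark.
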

\begin{proof}
Let $G$ be a signed digraph on $[n]$. Suppose first that $G$ has no vertices with a unique unsigned  predecessor. Let $f$ be the $G$-function over $\{0,1,2\}$ defined for all $i\in [n]$ by: $$
f_i(x)=\big(\bigvee_{j\in G^+(i)}\One{x_j= 2}\big)\vee\big(\bigvee_{j\in G^-(i)}\One{x_j<2}\big)\vee\big(\bigoplus_{j\in G^0(i)}\One{x_j=2}\big)
$$
Note that $f$ is a $G$-function because there is no $i$ such that $|G^0(i)|=1$. We have $f(x)\in\{0,1\}^n$ for all $x\in\{0,1,2\}^n$, and for all $x\in\{0,1\}^n$, we have $f_i(x)=0$ if $G^-(i)=\emptyset$ and $f_i(x)=1$ otherwise. Thus $f$ is a nilpotent function of class 2.

\medskip
Suppose now that each vertex has at least one unsigned predecessor. Let $f$ be the $G$-function over $\{0,1,2\}$ defined for all $i\in [n]$ by:
$$
f_i(x)=2\left[\big(\bigwedge_{j\in G^+(i)}\One{x_j =2}\big)\wedge\big(\bigwedge_{j\in G^-(i)}\One{x_j<2}\big)\wedge\big(\bigwedge_{j\in G^0(i)}\One{x_j=1}\big)\right]
$$
We have $f(x)\in\{0,2\}^n$ for all $x\in\{0,1,2\}^n$, and for all $x\in\{0,2\}^n$ we have $f_i(x)=0$  since $|G^0(i)|\geq 1$. Thus $f$ is a nilpotent function of class at most 2.
\end{proof}

\section{Boolean nilpotent functions on unsigned digraph}\label{sec:boolean_case}

In contrast with the non-boolean case, the following question is difficult. 

\begin{question}
Which signed digraphs admit a boolean nilpotent function? And if a signed digraph admits a boolean nilpotent function, what is its minimal class?  
\end{question}

Foremost, as shown by Propositions~\ref{pro:existence} and \ref{pro:strong}, some signed digraphs admit no boolean functions, and some signed digraphs admit no boolean nilpotent functions while all their initial strong components do (cf. Remark~\ref{rem:initial_1}). In addition, some strong signed digraphs admit no boolean nilpotent functions. For instance, if $G$ is a strong signed digraph with only negative (resp. positive) cycles, then every boolean $G$-function has no fixed points (resp. at least two fixed points) and is thus not nilpotent \cite{A08}. These observations lead us to study the unsigned version of the question, which is more tractable. 

\medskip
Indeed, in the unsigned case, every digraph admits a boolean function and a digraph admits a boolean nilpotent function if and only if all its initial strong components do (cf. Remark~\ref{rem:initial_2}). These are helpful simplifications. However, there are still, in the unsigned case, some digraphs that admit no boolean nilpotent functions. The most simple example is the directed cycle (if the interaction graph of a boolean function $f$ is a directed cycle, then $f$ has no fixed points if the cycle is negative, and two otherwise). In the following, we exhibit families of digraphs that admit a boolean nilpotent function whose class is at most a quadratic or linear function of $n$ (Section~\ref{sec:linear}). We then exhibit families of digraphs that admit a boolean nilpotent function of constant class (Section~\ref{sec:constant}).

\subsection{Nilpotent functions of non-constant class}\label{sec:linear}

A {\em walk} in a digraph $G$ is a sequence $i_0,\dots,i_p$ of vertices of $G$ such that $(i_k,i_{k+1})$ is an arc of $G$ for all $0\leq k<p$. Such a walk is a walk of length $p$ from $i_0$ to $i_p$. If $G$ is strong, the {\em loop number} of a $G$ is the greatest common divisor of the length of the cycles of~$G$ \cite{CLP04}. We say that $G$ is {\em primitive} if $G$ is strong and has loop number one. The following theorem is the graph-theoretic analogue of a classical result on powers of non-negative matrices.

\begin{theorem}[\cite{HV58}]\label{thm:primitive}
Let $G$ be a primitive digraph with $n$ vertices, and let $u$ and $v$ be two vertices (not necessarily distinct). For every $p\geq n^2-2n+2$, there exists a walk from $u$ to $v$ of length $p$.
\end{theorem}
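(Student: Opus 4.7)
The plan is to reduce the theorem to a Frobenius-type number-theoretic statement. The intuition is that any walk from $u$ to $v$ can be \emph{padded} by inserting a cycle at any vertex it meets; so if we can reach from $u$ a vertex lying on cycles whose lengths generate $\mathbb{Z}$ (as an additive semigroup, for large enough integers), then the set of achievable walk-lengths from $u$ to $v$ eventually contains every integer. Primitivity gives exactly this: the cycle lengths have $\gcd 1$, so by Sylvester--Schur every sufficiently large integer is a non-negative combination of them.

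Concretely, I would first let $s$ denote the length of a shortest cycle $C$ of $G$; since $G$ is strong on $n$ vertices, $s \leq n$. Fix any vertex $w$ on $C$. By strong connectivity, choose walks $u \to w$ of length $a \leq n-1$ and $w \to v$ of length $b \leq n-1$. Concatenating, and going around $C$ an arbitrary number of times at $w$, already produces walks from $u$ to $v$ of every length in $a+b + s\mathbb{N}$. To fill in the other residues modulo $s$, use primitivity: since $\gcd$ of all cycle lengths is $1$, one can reach from $w$ (via detours of length at most $n-1$ out and back) cycles $C_2, \dots, C_r$ with lengths $\ell_2, \dots, \ell_r$ such that $\gcd(s, \ell_2, \dots, \ell_r) = 1$. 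Each short excursion that goes out, around $C_j$ some number of times, and returns, shifts the residue of the walk length mod $s$. By the Chicken-McNugget/Sylvester theorem applied to this set of integers, every residue mod $s$ is realized at cost at most $(s-1)s$ beyond the base walk, so every integer above some threshold of order $n^2$ is an achievable walk length from $u$ to $v$.

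The main obstacle is sharpening the threshold to exactly $n^2 - 2n + 2 = (n-1)^2 + 1$: the naive bound is only $O(n^2)$ with a worse constant. This is Wielandt's classical exponent bound, and getting it requires care. I would argue residue-by-residue: for each $0 \leq r < s$, show that the shortest walk from $u$ to $v$ whose length is $\equiv r \pmod s$ has length at most $(n-1) + (s-1)(n-1) = s(n-1) - (n-1) + (n-1) = s(n-1)$, by picking a minimal set of detour cycles realizing each residue, each detour contributing at most $n-1$ vertices not on $C$. Since $s \leq n-1$ (the case $s = n$ forces $G$ to be a single Hamiltonian cycle, which is not primitive for $n \geq 2$, and $n = 1$ is trivial), every residue is realized by a walk of length at most $(n-1)^2$, and any $p \geq (n-1)^2 + 1$ is then obtained by adding a suitable non-negative multiple of $s$ to such a walk. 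Closing the induction on residues, and handling the edge cases $n = 1, 2$ directly, gives the stated bound.
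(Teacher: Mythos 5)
You should first note that the paper itself gives no proof of this statement: it is quoted as a classical result of Holladay and Varga \cite{HV58} (Wielandt's exponent bound), so your attempt can only be judged on its own terms.

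The Frobenius-padding strategy is a sensible starting point, and your preliminary reductions are fine (girth $s\le n-1$ for a primitive digraph with $n\ge 2$, padding by $s$ along a shortest cycle, and the observation that it suffices to realize every residue class modulo $s$ by a cheap $u$--$v$ walk that meets the short cycle). The problem is that the quantitative heart of the argument is precisely the step you assert rather than prove, and as stated it is false. You claim that for every residue $r$ modulo $s$ there is a suitable $u$--$v$ walk of length $\equiv r \pmod{s}$ and length at most $s(n-1)\le (n-1)^2$, ``each detour contributing at most $n-1$''. The extremal Wielandt digraph refutes this: for $n\ge 3$ take vertices $1,\dots,n$ with arcs $i\to i+1$ ($1\le i<n$), $n\to 1$ and $n\to 2$, whose only cycle lengths are $n$ and $s=n-1$, and take $u=v=1$. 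Every closed walk at $1$ decomposes into cycles, at least one of which is the Hamiltonian cycle, so its length is $an+b(n-1)$ with $a\ge 1$ and hence $\equiv a\pmod{n-1}$; to realize residue $0$ one needs $a\ge n-1$, i.e.\ length at least $n(n-1)=n^2-n>(n-1)^2$. So the per-residue cost is not ``$n-1$ per unit shift'': it is governed by a Frobenius number of the available cycle lengths, which is itself of order $n^2$ (for $\{n-1,n\}$ it is $(n-1)(n-2)-1$), and the bound $n^2-2n+2$ only emerges from a joint accounting of the residue cost together with the travel from $u$ into the cycles and out to $v$; in this example your final conclusion happens to survive only because the expensive residues are needed only for lengths that are anyway $\ge (n-1)^2+1$, but the lemma you reduce the theorem to does not hold.

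To salvage the outline you would have to replace that lemma by an estimate of the classical kind: for instance the Dulmage--Mendelsohn refinement shows that with $s$ the girth one has walks of every length $\ge n+s(n-2)$ between any two vertices (which gives the theorem since $s\le n-1$), and its proof does not bound each residue class uniformly but combines a path of length at most $n-s$ from $u$ into a shortest cycle $C$ with the loops that the vertices of $C$ acquire in the $s$-th power digraph; Wielandt's and Holladay--Varga's arguments are of a similar nature. Two smaller points: the number-theoretic input you want is the Sylvester--Frobenius (coin) theorem, not Sylvester--Schur; and your ``base'' walk $u\to w\to v$ through a fixed vertex $w$ of $C$ may already have length up to $2(n-1)$, not $n-1$, so even the base term of your accounting needs care.
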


Suppose that $G$ is a strong digraph on $[n]$, and let $f$ be the disjunctive network on $G$, that is, the $G$-function defined by $f_i(x)=\vee_{j\in G(i)} x_j$ for all $i\in [n]$. It is well known that the length of every limit cycle of $f$ divides the loop number of $G$ \cite{CLP04,SLV10,GM12}. Thus, in particular, if $G$ is primitive, then all the limit cycles of $f$ are fixed points, and it results from the theorem above that $f$ reaches a fixed point in at most $n^2-2n+2$ steps. This bound is optimal \cite{HV58,SM99}. However, $f$ is not nilpotent, since $0$ and $1$ are fixed points of $f$. By considering the conjunctive network associated with $G$, and defined in a similarly way, all these  observations remain valid. Below, we show that by adding at least one arc in $G$, one can break one of the two fixed points, and obtained a boolean nilpotent function of class at most $n^2-2n+3$. We have not been able to prove that this bound is optimal.

\begin{proposition}\label{pro:primitive}
Let $G$ be a digraph with $n$ vertices. If $G$ has a primitive spanning strict subgraph, then $G$ admits a boolean nilpotent function of class at most $n^2-2n+3$.
\end{proposition}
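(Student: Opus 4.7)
The plan is to modify the conjunctive network on a primitive spanning subgraph so as to destroy its all-ones fixed point, using Theorem~\ref{thm:primitive} to control the speed of convergence to the remaining all-zeros fixed point. Let $H$ be a primitive spanning strict subgraph of $G$, fix an arc $(a,b)\in A(G)\setminus A(H)$, and observe that since $H$ is strong we have $H(b)\neq\emptyset$, and since $(a,b)\notin A(H)$ we have $H(b)\subseteq G(b)\setminus\{a\}$, so $G(b)\setminus\{a\}$ is nonempty. I would define $f:\{0,1\}^n\to\{0,1\}^n$ by
\[
f_i(x)=\bigwedge_{j\in G(i)}x_j\quad\text{for every }i\neq b,\qquad
f_b(x)=\overline{x_a}\wedge\bigwedge_{j\in G(b)\setminus\{a\}}x_j,
\]
and verify by a routine essential-dependence argument that $|G(f)|=G$ (for $i\neq b$, set all other inputs in $G(i)$ to $1$; for $i=b$, use $x_a=0$ to isolate each $j\in G(b)\setminus\{a\}$, and set all coordinates in $G(b)\setminus\{a\}$ to $1$ to isolate $x_a$).

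The core of the argument is a pointwise comparison of $f$ with the conjunctive network $g$ on $H$, defined by $g_i(x)=\bigwedge_{j\in H(i)}x_j$. Because $H(i)\subseteq G(i)$ for every $i$ and $H(b)\subseteq G(b)\setminus\{a\}$, we get $f\leq g$ pointwise, and since $g$ is monotone, an easy induction on $k$ gives $f^k\leq g^k$ pointwise for every $k\geq 0$. Theorem~\ref{thm:primitive} applied to $H$ then gives, for every $p\geq n^2-2n+2$ and every $u,v\in[n]$, a walk of length $p$ from $u$ to $v$ in $H$; by the standard walk-interpretation of iterated conjunctive networks this yields $g^p(x)_v=\bigwedge_{u\in[n]}x_u$, so $g^p(x)=\mathbf{0}$ whenever $p\geq n^2-2n+2$ and $x\neq\mathbf{1}$, where $\mathbf{0}$ and $\mathbf{1}$ denote the all-zeros and all-ones vectors.

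To finish, observe that $f_b(\mathbf{1})=\overline{1}\wedge 1=0$, so $f(\mathbf{1})\neq\mathbf{1}$. Applying the previous step to $y=f(\mathbf{1})$ gives $f^{n^2-2n+3}(\mathbf{1})\leq g^{n^2-2n+2}(f(\mathbf{1}))=\mathbf{0}$, and for any $x\neq\mathbf{1}$ the same step gives $f^{n^2-2n+2}(x)\leq g^{n^2-2n+2}(x)=\mathbf{0}$. Hence $f^{n^2-2n+3}$ is the constant $\mathbf{0}$, and $f$ is a nilpotent $G$-function of class at most $n^2-2n+3$. The main subtlety I expect is designing the modification at $b$ so that it simultaneously (i) breaks the fixed point $\mathbf{1}$, (ii) introduces an essential dependence on $(a,b)$ so as to realise $G$, and (iii) remains pointwise below a monotone conjunctive network on $H$ so that Theorem~\ref{thm:primitive} applies; the ``killer'' factor $\overline{x_a}$ accomplishes all three, because $f_b\leq g_b$ holds even though $f_b$ is non-monotone in $x_a$.
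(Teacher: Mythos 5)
Your proof is correct and follows essentially the same route as the paper: both break the all-ones fixed point of a conjunctive-type network by negating arc(s) of $G$ outside the primitive spanning subgraph $H$, and then use Theorem~\ref{thm:primitive} to show that any configuration containing a zero is mapped to $\mathbf{0}$ after $n^2-2n+2$ steps. The only (harmless) differences are that the paper negates every arc of $G\setminus H$ while you negate a single arc $(a,b)$, and that you package the zero-propagation along $H$ as a pointwise comparison $f\le g$ with the monotone conjunctive network on $H$ instead of arguing it directly along walks.
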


\begin{proof}
Suppose that $G$ has a primitive spanning strict subgraph $H$. Let $f$ be the boolean $G$-function defined by for all $i\in [n]$ by:
\[
f_i(x)=\bigwedge_{j\in H(i)}x_j\land \bigwedge_{j\in G(i)\setminus H(i)}\overline{x_j}.
\]
Let $x\in\{0,1\}^n$ and $p\geq n^2-2n+2$. Suppose that $x_i=0$ for some $i$, and let $j$ be any vertex of~$G$. By Theorem~\ref{thm:primitive}, $H$ has a walk from $i$ to $j$ of length $p$, and since $x_i=0$, we deduce from the definition of $f$ that $f^p_j(x)=0$. Thus if $x<1$ then  $f^p(x)=0$. Now, if $x=1$ then $f_i(x)=0$ for every $i$ such that $G(i)\setminus H(i)\neq\emptyset$, and such $i$ exists since $H$ is a strict subgraph of $G$. Thus $f(x)<1$ and we deduce that $f^{p+1}(x)=0$. This proves that $f^{n^2-2n+3}=\cst=0$. 
\end{proof}

The theorem below shows, in a strong sense, that if $G$ is itself primitive but has no primitive spanning strict subgraphs, then $G$ does not necessarily admit a boolean nilpotent function. Let us call {\em double-cycle} the digraph $C_{\ell,r}$ obtained from two cycles $C_\ell$ and $C_r$, of length $\ell$ and $r$ respectively, by identifying one vertex. The next theorem characterizes the double cycles that admit a boolean nilpotent function as well as the class of such nilpotent functions (see \cite{DNS2012} for an analysis of limit cycles of boolean $C_{\ell,r}$-functions).

\begin{theorem}
$C_{\ell,r}$ admits a boolean nilpotent function if and only if $\min(\ell,r)$ divides $\max(\ell,r)$, and the class of every boolean nilpotent $C_{\ell,r}$-function is $2\max(\ell,r)-1$. 
\end{theorem}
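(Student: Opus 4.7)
\textbf{Setup and reduction.} I label the vertices $v_0,v_1,\dots,v_{\ell-1},w_1,\dots,w_{r-1}$ with $v_0$ the shared vertex, short cycle $v_0\to v_1\to\cdots\to v_{\ell-1}\to v_0$, long cycle $v_0\to w_1\to\cdots\to w_{r-1}\to v_0$, and assume WLOG $\ell\leq r$. Any $C_{\ell,r}$-function $f$ has $f_{v_i},f_{w_j}$ (for $i,j\geq 1$) equal to the identity or negation of its unique predecessor, while $f_{v_0}(x)=F(x_{v_{\ell-1}},x_{w_{r-1}})$ for a two-variable boolean $F$ depending essentially on both inputs. Complementing each $v_i,w_j$ by the cumulative cycle sign between it and $v_0$ makes every cycle arc the identity, at the price of replacing $F$ by another essentially-dependent two-variable function, so I assume this normal form. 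Writing $u_t:=v_0^{(t)}$, one then has $u_t=F(u_{t-\ell},u_{t-r})$ for $t\geq r$, and $f^N$ equals the constant $c^n$ iff $u_{N-r+1}=\cdots=u_N=c$.

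\textbf{Narrowing $F$.} Of the ten essentially-dependent two-variable boolean functions, I rule out seven: $\{\wedge,\vee\}$ give $f$ two fixed points; $\{\overline\wedge,\overline\vee\}$ give none; and $\{\oplus,\overline\oplus\}$ make $f$ affine over $\mathbb{F}_2$ with linear-part characteristic polynomial $z^n+z^{n-\ell}+z^{n-r}$ (summing over the only two linear subgraphs of $C_{\ell,r}$), which is not $z^n$, so the linear part is not nilpotent. This leaves the four ``hybrid'' functions $\{a\wedge\bar b,\bar a\wedge b,a\vee\bar b,\bar a\vee b\}$. The global complementation $x\mapsto\bar x$ preserves the class and swaps the $\wedge$- with the $\vee$-variants, so I focus on case~(A)~$F=a\wedge\bar b$ and case~(B)~$F=\bar a\wedge b$, both with unique diagonal fixed point $c=0$.

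\textbf{Case (A): upper bound and non-existence.} In case (A), $u_t=u_{t-\ell}\wedge\overline{u_{t-r}}$. If $\ell\nmid r$ (forcing $\ell\geq 2$), the sequence $u_t=\One{\ell\mid t}$ satisfies the recurrence and is realizable by choosing the $w_j^{(0)}$ so that $v_0^{(t)}=\One{\ell\mid t}$ for $0\leq t<r$, yielding a non-constant periodic orbit of $f$; hence $f$ is not nilpotent. If $\ell\mid r$, write $r=k\ell$ and set $\phi(t):=u_t$ for $0\leq t<r$. Induction on $m$ gives, for $t=r+m\ell+j$ with $0\leq m<k$ and $0\leq j<\ell$,
\[
u_t\;=\;\phi((k{-}1)\ell{+}j)\,\wedge\,\overline{\phi(j)}\,\wedge\,\overline{\phi(\ell{+}j)}\,\wedge\,\cdots\,\wedge\,\overline{\phi(m\ell{+}j)}.
\]
Any initial state of $f$ forces each chain $(\phi(j),\phi(\ell{+}j),\dots,\phi((k{-}1)\ell{+}j))$ to be non-increasing, because $\phi((i{+}1)\ell{+}j)=\phi(i\ell{+}j)\wedge\overline{w_{r-(i+1)\ell-j}^{(0)}}$. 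Thus either $\phi((k{-}1)\ell{+}j)=0$ and the first factor kills the conjunction, or all $\phi(i\ell{+}j)=1$ and a complemented factor does. Hence $u_t=0$ on $[r,2r{-}1]$ and $f^{2r-1}=0$: the class is at most $2r-1$.

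\textbf{Matching lower bound and case (B).} For tightness I pick an initial state making the chain $j=\ell-1$ entirely $1$ (set $v_1^{(0)}=1$, $w_{r-1}^{(0)}=0$, and inductively $w_{r-i\ell-(\ell-1)}^{(0)}=0$ for $1\leq i<k$); then $\phi(r-1)=1$, so $w_{r-1}^{(2r-2)}=u_{r-1}=1\neq 0$ and the class is at least $2r-1$. In case (B), $u_t=\overline{u_{t-\ell}}\wedge u_{t-r}$: if $\ell<r$ the orbit $u_t=\One{r\mid t}$ is a realizable periodic non-constant solution, so nilpotence forces $\ell=r$; when $\ell=r$ one gets $u_t=\overline{u_{t-\ell}}\wedge u_{t-\ell}=0$ for $t\geq\ell$, and the same lower-bound argument yields class $2r-1$. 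The $\vee$-variants follow from the already-invoked global complementation. The main obstacle I foresee is the bookkeeping around realizability: verifying that the periodic pattern used in the non-existence proofs and the extremal chain used in the lower-bound proof both correspond to genuine initial states of $f$, with particular care for the edge cases $\ell=1$ (where non-existence is vacuous since $1\mid r$) and $\ell=r$.
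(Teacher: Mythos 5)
There is a genuine gap, and it sits exactly in the cases you discarded with the characteristic-polynomial argument. Your claim that the $\oplus$ and $\overline\oplus$ choices of $F$ yield a linear part with characteristic polynomial $z^n+z^{n-\ell}+z^{n-r}\neq z^n$ over $\mathbb{F}_2$ is false when $\ell=r$: the two linear subgraphs then have the same number of vertices, their contributions cancel modulo $2$, the characteristic polynomial \emph{is} $z^n$, and the linear part is nilpotent. Indeed the recurrence gives $u_t=u_{t-\ell}\oplus u_{t-\ell}=0$ (resp.\ $=1$ for $\overline\oplus$) for all $t\geq r$, so these $C_{\ell,\ell}$-functions are nilpotent; the paper's Cases~1 and~2 show they have class exactly $2\ell-1$. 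This matters for both halves of the statement you are proving: since the theorem asserts that \emph{every} boolean nilpotent $C_{\ell,r}$-function has class $2\max(\ell,r)-1$, you cannot exclude the $\oplus$/$\overline\oplus$ functions from the class analysis, and your argument as written simultaneously asserts something false (non-nilpotence when $\ell=r$) and omits the required computation of their class. The repair is not hard -- for $\ell<r$ your cancellation-free polynomial argument (or the paper's combinatorial argument with the sets $Y^k$) does rule out nilpotence, and for $\ell=r$ the identity $u_t=u_{t-\ell}\oplus u_{t-\ell}$ gives the upper bound $2\ell-1$, with a lower bound via an initial state such as $z_i=1$ exactly on one cycle -- but it must be added.

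Apart from this, your outline follows essentially the same route as the paper: normalization by XOR-conjugation so that all cycle arcs are identities, elimination of $\{\wedge,\vee\}$ and $\{\overline\wedge,\overline\vee\}$ via fixed-point counts, and a recurrence analysis $u_t=F(u_{t-\ell},u_{t-r})$ for the four hybrid cases (the paper writes the same analysis directly in terms of $f^k_1$, which also disposes of the realizability bookkeeping you flag, since it works with explicit initial vectors rather than with abstract solutions of the recurrence). Your global-complementation trick to halve the hybrid cases is a small economy over the paper's ``similar arguments,'' but it does not compensate for the missing $\oplus$/$\overline\oplus$, $\ell=r$ case.
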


\begin{proof}
Let $1,2,\dots,\ell$ be the vertices of $C_\ell$ given in the order, and let $1,\ell+1,\ell+2,\dots,\ell+r-1$ be the vertices of $C_r$ given in the order. Let $f$ by any boolean $C_{\ell,r}$-function and $n=\ell+r-1$. For every $b\in\{0,1\}^n$, the function $h$ that maps every $x\in\{0,1\}^n$ to $f(x\oplus b)\oplus b$ is a boolean $C_{\ell,r}$-function isomorphic to $f$ (thus $h$ is nilpotent of class $k$ if and only if $f$ is). Hence, we can assume, without loss of generality, that 
\[
\left\{
\begin{array}{l}
f_2(x)=f_{\ell+1}(x)=x_1\\
f_i(x)=x_{i-1}\text{ for all }i\not\in\{1,2,\ell+1\}.
\end{array}
\right.
\] 
In this way, for all $k\in\N$ we have 
\begin{equation}\label{eq:double_cycle}
\left\{
\begin{array}{l}
f^{k+i}_{1+i}(x)=f^k_1(x)\text{ for }1\leq i< \ell\\[1mm]
f^{k+i}_{\ell+i}(x)=f^k_1(x)\text{ for }1\leq i<r.
\end{array}
\right.
\end{equation}
The only component of $f$ that is not defined is $f_1$, which only depends on $x_\ell$ and $x_n$. We proceed by cases. If $f_1(x)=x_\ell\land x_n$ or $f_1(x)=x_\ell\lor x_n$ then the interaction graph of $f$ has only positive cycles, thus $f$ has at least two fixed points, and thus $f$ is not nilpotent.   Also, if $f_1(x)=\overline{x_\ell}\land\overline{x_n}$ or $f_1(x)=\overline{x_\ell}\lor\overline{x_n}$, then the interaction graph of $f$ has only negative cycles, thus $f$ has no fixed points, and thus $f$ is not nilpotent. It remains the six cases below. In the first two ones, $f$ is nilpotent if and only if $\ell=r$. In the other four cases, $f$ is nilpotent if and only if $\ell$ and $r$ are not coprime. In every case, when $f$ is nilpotent its class is $2\max(\ell,r)-1$. 
\begin{description}
\item[{\it Case 1: $f_1(x)=x_\ell\oplus x_n$.}] 
We prove the following: $f$ is nilpotent if and only if $\ell=r$, and if $f$ is nilpotent, then its class is $2\ell-1=n$. 

Suppose that $f$ is nilpotent with class $m$ and suppose, without loss of generality, that $r\geq\ell$. Then $m\geq r$ and since $f(0)=0$ we have $f^m=0$. For all $0\leq k<\ell$, let $Y^k$ be the set of $y\in\{0,1\}^n$ such that 
\[
\left\{
\begin{array}{l}
y_{\ell-k}=1\text{ and }y_i=0\text{ for }1\leq i<\ell-k\\
y_{n-k}=1\text{ and }y_i=0\text{ for }\ell+1\leq i<n-k
\end{array}
\right.
\]
It is easy to see that: if $f(x)=0$ and $x\neq 0$ then $x\in Y^0$; and if $f(x)\in Y^k$ then $x\in Y^{k+1}$. We deduce that for each $x$ such that $f^{m-1}(x)\neq 0$ we have $f^{m-\ell+1}(x)\in Y^{\ell-2}$. Thus we have $f^{m-\ell+1}_2(x)=1$. Now, if $\ell<r$ then $\ell+1<n-\ell+2$ thus by the definition of $Y^{\ell-2}$ we have $f^{m-\ell+1}_{\ell+1}(x)=0$. But according to (\ref{eq:double_cycle}) we have $f^{m-\ell+1}_2(x)=f^{m-\ell}_1(x)=f^{m-\ell+1}_{\ell+1}(x)$, a contradiction. We deduce that $\ell=r$.  

Suppose now that $\ell=r$, and let us prove that $f$ is a nilpotent function of class $2\ell-1$. Following (\ref{eq:double_cycle}) we have 
\[
\forall k\geq\ell,\qquad f^k_1(x)=f^{k-1}_\ell(x)+f^{k-1}_{2\ell-1}(x)=f^{k-\ell}_1(x)+f^{k-\ell}_1(x)=0
\]
and we deduce from this and (\ref{eq:double_cycle}) that 
\[
\forall 1\leq i<\ell,\qquad f^{2\ell-1}_{1+i}(x)=f^{2\ell-1}_{\ell+i}(x)=f^{2\ell-1-i}_1(x)=0.
\]
Thus $f^{2\ell-1}(x)=0$ so $f$ is a nilpotent function of class at most $2\ell-1$. Let $z\in\{0,1\}^n$ be defined by $z_i=1$ if and only if $1\leq i\leq\ell$. Then $f^{\ell-1}(z)=1$ thus $f^{2\ell-2}_\ell(z)=f^{2\ell-2}_{2\ell-1}(z)=1$ and we deduce that the class of $f$ is exactly $2\ell-1$. 
\item[{\it Case 2: $f_1(x)=x_\ell \oplus x_n \oplus 1$.}]
We prove with similar arguments that $f$ is nilpotent if and only if $\ell=r$, and that if $f$ is nilpotent then its class is $2\ell-1$. 
\item[{\it Case 3: $f_1(x)=x_\ell \land \overline{x_n}$.}]
We prove the following: $f$ is nilpotent if and only if $\ell$ divides $r$; and if $f$ is nilpotent, then its class is $2r-1$. Foremost, we have following properties
\begin{equation}\label{eq:and_case}
\begin{array}{ll}
\forall k,p\geq 0,&\qquad f^k_1(x)=0~\Rightarrow f^{k+p\ell}_1(x)=0,\\[1mm]
\forall k\geq r,&\qquad f^k_1(x)=1~\Rightarrow f^{k-r}_1(x)=0.\\[1mm]
\end{array}
\end{equation}
Suppose that $\ell$ divides $r$, and let $p$ be such that $r=p\ell$. We deduce from the implications above that if $f^k_1(x)=1$ with $k\geq r$ then $f^{k-r}_1(x)=0$ and thus $f^{k-r+p\ell}_1(x)=0$, a contradiction. Thus  $f^k_1(x)=0$ for all $k\geq r$, and we deduce that $f^{2r-1}(x)=0$. Thus $f$ is a nilpotent function of class at most $2r-1$. Let $z\in\{0,1\}^n$ be defined by $z_i=1$ if and only if $1\leq i\leq \ell$. We have $f^{r-1}(z)=1$, and thus $f^{2r-1}_{\ell+r-1}(z)=1$. Hence, the class of $f$ is exactly $2r-1$. 

Suppose now that $\ell$ does not divide $r$. Let $z$ be such that $z_1=1$ and $z_i=0$ for $i\neq 1$. Following~(\ref{eq:and_case}) we have 
$$
\forall p\in\N,~\forall 1\leq k<\ell,\qquad f^{p\ell+k}_1(z)=0.
$$
We now prove that $f^{p\ell}_1(z)=1$ by induction on $p$. We have $f^0_1(z)=z_1=1$. Let $p>1$ and suppose that $f^{p\ell-\ell}_1(z)=1$. Then $f^{p\ell-1}_\ell(z)=1$. If $p\ell>r$ then $f^{p\ell-1}_n(z)=0$ since otherwise we have $f^{p\ell-r}_1(z)=1$ thus $p\ell-r$ is a multiple of $\ell$ and thus $r$ is a multiple of $\ell$, a contradiction. Hence, $f^{p\ell-1}_\ell(z)=1$ and $f^{p\ell-1}_n(z)=0$ thus $f^{p\ell}_1(z)=1$. So we have proved that $f^k_1(z)=1$ if and only if $k$ is a multiple of $\ell$, thus $f$ is not nilpotent.        
\item[{\it Case 4: $f_1(x)=\overline{x_\ell} \land x_n$.}]
We prove with similar arguments that $f$ is nilpotent if and only if $r$ divides $\ell$, and that if $f$ is nilpotent, then its class is $2\ell-1$.
\item[{\it Case 5: $f_1(x)=x_\ell \lor \overline{x_n}$.}] 
We prove with similar arguments that $f$ is nilpotent if and only if $\ell$ divides $r$, and that if $f$ is nilpotent, then its class is $2r-1$.
\item[{\it Case 6: $f_1(x)=\overline{x_\ell} \lor x_n$.}]
We prove with similar arguments that $f$ is nilpotent if and only if $r$ divides $\ell$, and that if $f$ is nilpotent, then its class is $2\ell-1$.
\end{description} 
\end{proof}

For all $m\geq 1$, the {\em wheel} $W_m$, also called {\em $m$-wheel}, is the digraph obtained from $C_m$ by adding a vertex $v$, called the {\em center}, and an arc from $v$ to every vertex of $C_m$. We say that $G$ contains an $m$-wheel if some subgraph of $G$ is isomorphic to $W_m$. So for instance, if $G$ is strong and has at least two vertices, then $G$ contains a loop if and only if it contains $W_1$. Note also that $G$ contains a wheel if and only if the out-neighborhood of some vertex induces a non-acyclic digraph.

\medskip
Let $G$ be a strong digraph with $n\geq 2$ vertices. Below, we prove that if $G$ has an $m$-wheel, then $G$ admits a boolean nilpotent function of class at most $2n-m+1$. As a consequence, if $G$ has a loop then $G$ admits a nilpotent function of class at most $2n$. For this particular case, we establish a better bound, $2n-1$, which is optimal, as shown by the double cycle $C_{1,n}$ and the previous theorem. 

\begin{theorem}\label{thm:loop} Let $G$ be a strong digraph with $n\geq 2$ vertices. 
\begin{itemize}
\item
If $G$ has a $m$-wheel, then $G$ admits a boolean nilpotent function of class at most $2n-m+1$. 
\item
If $G$ has a loop, then $G$ admits a boolean nilpotent function of class at most $2n-1$.
\end{itemize}
\end{theorem}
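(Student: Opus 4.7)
The plan is to construct an explicit boolean $G$-function for each bullet, inspired by Case~3 of the previous theorem (which achieved the tight class $2\max(\ell,r)-1$ for $C_{1,n}$): take a conjunctive rule on all in-neighbors at every vertex and flip one carefully chosen arc to destroy the all-ones fixed point, then bound the class by tracking how zeros spread along walks in $G$. The wheel or loop provides a short cycle that lets a zero close a loop quickly, while the strong connectivity of $G$ guarantees that, after the flipped arc has fired, the zero eventually reaches every vertex.

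For the first bullet, let $v$ be the center and $u_0 \to u_1 \to \cdots \to u_{m-1} \to u_0$ be the cycle of the $m$-wheel. Since $G$ is strong, I would choose a spanning in-arborescence $T$ rooted at $v$ that contains the arcs $(v,u_i)$, so that every cycle vertex has $T$-depth $1$ and every other vertex has $T$-depth at most $n-m$. Then I would define $f_w(x)=\bigwedge_{j\in G(w)} x_j$ at each $w\neq u_0$ and $f_{u_0}(x)=\overline{x_v}\wedge\bigwedge_{j\in G(u_0)\setminus\{v\}} x_j$ at $u_0$. Verifying that $f$ is a $G$-function is routine, and the bound on its class follows by tracking the death set $Z_k=\{w:f^k(x)_w=0\}$: starting from $x=\mathbf{1}$, the flip puts $u_0\in Z_1$; while $v\notin Z_k$, the formula at $u_0$ forces $u_0\in Z_{k+1}$; the zero simultaneously floods $G$ along shortest walks and reaches $v$ by step at most $n-m+1$; from that moment onwards, $u_{m-1}$ (already zero for $k\geq m$) locks $u_0$ into zero through the unflipped part of its AND. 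An arbitrary initial state contributes at most $n$ additional steps to first produce a zero at $u_0$, giving the bound $2n-m+1$.

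For the second bullet, the first bullet applied with $m=1$ yields class $\leq 2n$. To shave one step, I would use the loop at $u$ to include $x_u$ in the conjunction at $u$: once $u$ first becomes zero it stays zero without requiring a fresh closing walk, mirroring the saving achieved in the $C_{1,n}$ analysis where the loop vertex's class is exactly $2n-1$. The main obstacle throughout is controlling the resurrection risk from the flipped arc: once its source turns zero, the $\overline{x_\bullet}$ term becomes $1$ and could in principle let $u_0$ return to $1$; resolving this relies on a precise synchronisation between zero-propagation around the short wheel cycle and zero-propagation through the rest of $G$, guaranteeing that $u_{m-1}$ is already zero by the time the flipped source dies and hence pins the AND at $u_0$ to zero forever.
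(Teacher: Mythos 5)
Your overall template (an and-net with a single negated arc, plus zero-propagation via strong connectivity) is indeed the paper's, but the crucial point --- \emph{where} the negative arc may be placed --- is handled incorrectly, and the ``synchronisation'' you appeal to at the end is false in general. Flipping a centre-to-cycle arc $(v,u_0)$ destroys exactly the mechanism that makes the argument work: the paper keeps the wheel cycle entirely positive, so that once all its vertices are simultaneously $0$ they stay $0$ forever and act as a permanent source of zeros; with your flip, a zero at $v$ fails to kill $u_0$ (it enables it), and nothing forces $u_{m-1}$, or $v$ itself, to remain $0$, because zeros in an and-net are not persistent --- the ``death set'' does not flood monotonically, it can simply rotate. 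Concretely, take $n=3$, $V=\{v,u_0,u_1\}$ with arcs $(v,u_0),(v,u_1),(u_0,u_1),(u_1,u_0),(u_1,v)$: this digraph is strong and contains the $2$-wheel with centre $v$. Your function is $f_v=x_{u_1}$, $f_{u_0}=\overline{x_v}\wedge x_{u_1}$, $f_{u_1}=x_v\wedge x_{u_0}$, and it has the limit cycle $(x_v,x_{u_0},x_{u_1})\colon(1,1,0)\mapsto(0,0,1)\mapsto(1,1,0)$, so it is not nilpotent. Your loop-case variant fails as well: with a loop on $u$, centre $v$, $G(u)=\{u,v\}$, and every vertex other than $u$ having an in-neighbour different from $u$ (e.g.\ $V=\{u,v,w,z\}$ with arcs $(u,u),(v,u),(u,z),(w,z),(z,w),(w,v)$), the state with $x_u=1$ and all other coordinates $0$ is a second fixed point besides the all-zero state, so the function is not nilpotent.

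What is missing is precisely the paper's criterion for choosing the negated arc. The paper does not flip an arc into the wheel; it flips a \emph{good arc} $(a,b)$ for $(v,w)$, with $b$ outside the cycle, whose existence in every strong digraph is Lemma~\ref{lem:goodarc} --- the technical heart of the proof, established by a delicate minimal-counterexample argument. The good-arc conditions guarantee (i) that from every vertex a zero can reach $v$ along a path avoiding $(a,b)$, so the wheel cycle (all of whose arcs, and all arcs from $v$ into it, remain positive) becomes simultaneously, hence permanently, zero within roughly $n$ steps, and (ii) that the resurrection caused by the literal $\overline{x_a}$ is confined to a chain of in-degree-one vertices, so the dynamics converges to a fixed point $\gamma$ which is in general \emph{not} the all-zero state --- another point where your proposal, which aims at $\mathbf{0}$ and tracks only a growing zero set, would need repair. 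Without a selection rule for the flipped arc and a proof of permanence of the zeros, neither bullet is established.
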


\begin{remark}
If $G$ is strong and has a $m$-wheel with $m\geq 2$, then the fact that $G$ admits a boolean nilpotent function is an easy consequence of Proposition~\ref{pro:primitive} (but the linear upper bound $2n-m+1$ is not a consequence of this proposition). Indeed, let $C_m$ be the cycle of the $m$-wheel, and let $v$ be its center. Let $P$ be a shortest path from $C_m$ to $v$, and let $b$ be the first vertex of $P$. Let $a$ be the vertex preceding $b$ in $C_m$, and let $c$ be the vertex succeeding $b$ in $C_m$ (by the minimality of $P$, $a$ and $c$ are not in $P$). Let $C_1$ be the cycle obtained from $P$ and the arc $(v,b)$, and let $C_2$ be the cycle obtained from $P$ and the arcs $(v,a)$ and $(a,b)$. Let $\ell_1$ be the length of $C_1$ and $\ell_2$ be the length of $C_2$. Then $H=G\setminus (b,c)$ is strong and contains both $C_1$ and $C_2$. Since $\ell_2=\ell_1+1$, $H$ is primitive, and thus, by Proposition~\ref{pro:primitive}, $G$ admits a boolean nilpotent function of class at most $n^2-2n+3$. However, the second point of Theorem~\ref{thm:loop} shows that this quadratic upper bound is far from being optimal. Note that if $m=1$, that is if $G$ has a loop, then the fact that $G$ admits a boolean nilpotent function does not result directly from Proposition 7, since, for instance, $C_{1,n}$ does not satisfy the condition of this proposition. 
\end{remark}

The proof of Theorem~\ref{thm:loop} needs an additional definition and a lemma.

\begin{definition}
Given a digraph $G$, and two arcs $(a,b)$ and $(v,w)$ in it, we say that $(a,b)$ is a {\em good arc} for $(v,w)$ if: 
\begin{itemize}
\item 
$b\neq w$, 
\item
for all vertex $u\neq w$, $G\setminus(a,b)$ has a path from $u$ to $v$, 
\item
for all vertex $u$, $G\setminus(a,b)$ has a path from $w$ to $u$, or $G$ has a path from $w$ to $u$ containing $(a,b)$ and such that every vertex in the subpath from $b$ to $u$ is of in-degree one in $G$. 
\end{itemize}
\end{definition}

\begin{lemma}\label{lem:goodarc}
Every arc of a strong digraph with at least two vertices has a good arc. 
\end{lemma}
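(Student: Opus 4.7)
The plan is a case analysis. \textbf{Case 1:} If some arc $(a,b)$ of $G$ with $b \neq w$ leaves $G$ strong after removal, then $(a,b)$ is immediately good, since condition~(1) holds by choice and conditions~(2) and (3a) hold because $G\setminus(a,b)$ is strong.

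\textbf{Case 2:} Otherwise, every arc $(a,b)$ with $b \neq w$ is critical. Fix an ear decomposition of $G$ whose initial cycle $C_0$ passes through $v$ and $w$ (taking $C_0$ to contain the arc $(v,w)$ itself when $v \neq w$; when $v=w$ taking any cycle of length at least two through $w$); re-order the ears so that every chord (single-arc) ear comes after every multi-arc ear---this is legitimate because chord ears introduce no new vertices. A key observation: in Case~2 every chord ear must end at $w$, for otherwise its arc would be non-critical, contradicting the hypothesis.

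\textbf{Sub-case 2a:} If a multi-arc ear exists, let $P = u_0, u_1, \ldots, u_\ell$ be the last one in the re-ordered decomposition; I would pick $(a,b) = (u_0, u_1)$. Let $G'$ denote the strong structure present before $P$; note that $v, w \in V(C_0) \subseteq V(G')$ while $u_1 \neq w$ is new. Condition~(2): vertices of $V(G')$ reach $v$ through $G'$, and intermediates of $P$ reach $v$ by first following the ear to $u_\ell \in V(G')$. Condition~(3): vertices of $V(G')$ are reachable from $w$ through $G'$, and each intermediate $u_i$ is handled by the forced path $w \to \cdots \to u_0 \to u_1 \to \cdots \to u_i$, because the intermediates $u_1, \ldots, u_{\ell-1}$ all have in-degree exactly one in $G$: they start that way when $P$ is added, and the only ears after $P$ are chord ears ending at $w$, so no new in-arc is attached to any $u_i$.

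\textbf{Sub-case 2b:} If no multi-arc ear exists, then $G$ is $C_0$ together with chord ears all directed into $w$. With the labelling $C_0 = c_0, c_1, \ldots, c_{n-1}, c_0$ such that $w = c_0$, pick $(a,b) = (c_0, c_1)$. Condition~(2) follows by walking the cycle forward from any $c_i \neq w$ to $v$ without using the removed arc (since $v$ lies on $C_0$, either at $c_{n-1}$ or, when $v=w$, at $c_0$), and condition~(3b) is witnessed by the path $c_0, c_1, \ldots, c_i$, since every $c_m$ with $m \neq 0$ has in-degree exactly one in $G$ (chord ears enter $w$, not $c_m$). \textbf{Main obstacle:} the delicate point is the in-degree-one claim in Sub-case~2a, which depends both on choosing $C_0$ to contain $v$ (so $v$ is never itself an intermediate of $P$, the configuration that would otherwise disconnect $v$ from $V(G')$ after the removal) and on re-ordering the decomposition together with the Case~2 restriction that chord ears end at $w$, which jointly prevent any new in-arc from being attached to the intermediates of $P$.
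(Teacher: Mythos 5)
Your argument is correct, and it takes a genuinely different route from the paper. The paper proceeds by a minimal counterexample (first on vertices, then on arcs): it shows a counterexample has no removable arc with head distinct from $w$, rules out $v=w$ and in-degree one for $v$ via contraction arguments, invokes the theorem of Garc\'ia-L\'opez and Mariju\'an that every minimal strong digraph has two linear vertices to produce a linear vertex $\ell\neq w$, and derives the final contradiction by suppressing $\ell$ and lifting a good arc of the smaller digraph back to $G$. You instead argue directly: either some arc with head $\neq w$ is non-critical (and is then trivially good), or every such arc is critical, in which case an ear decomposition whose initial cycle passes through $v$ and $w$ (containing $(v,w)$ when $v\neq w$), with single-arc ears postponed, forces all chord ears to end at $w$; the first arc of the last multi-arc ear (or of the spanning cycle $C_0$ when there is none) is then exhibited as a good arc, with the in-degree-one alternative of the definition verified for the new internal vertices exactly because no later ear can enter them. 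This is constructive and self-contained modulo the standard ear-decomposition theorem (any prescribed cycle can serve as the initial cycle, every intermediate union is strong, the ears partition the arc set), whereas the paper's proof trades that machinery for an external structural result on minimal strong digraphs and a more delicate case analysis around contractions. Two trivial points you leave implicit and should state: loops are single-arc ears, so in your Case~2 every loop sits at $w$ (removing a loop never destroys strongness), and condition~(3) for $u=w$ is witnessed by the empty path; neither affects the correctness of the argument.
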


\begin{proof}
\setcounter{claimcounter}{0}
Suppose, for a contradiction, that $G$ is a smallest counter example with respect to the number $n$ of vertices and then with respect to the number $m$ of arcs. It is straightforward to show that $n\geq 3$. Let $(v,w)$ be an arc of $G$ without good arc. 

\begin{claim}
$G\setminus(a,b)$ is not strong for every arc $(a,b)\neq (v,w)$.
\end{claim}

\begin{subproof}
Indeed, if $G$ has an arc $(a,b)$ with $b\neq w$ such that $G\setminus (a,b)$ is strong, then $(a,b)$ is obviously a good arc for $(v,w)$ in $G$, a contradiction. Furthermore, if $G$ has an arc $(u,w)$ with $u\neq v$ such that $G\setminus (u,w)$ is strong, then $G\setminus (u,w)$ has a good arc $(a,b)$ for $(v,w)$ (since $G\setminus (u,w)$ is not a counter-example), and it is straightforward to show that $(a,b)$ is still a good arc for $(v,w)$ in $G$. This proves the claim.
\end{subproof}

\begin{claim}
$v\neq w$. 
\end{claim}

\begin{subproof}
Suppose that $v=w$. By the first claim, $G'=G\setminus (v,v)$ is a minimal strong digraph and following \cite{GLM12}, $G'$ contains two linear vertices. Thus $G'$ has a linear vertex $c\neq v$. Let $a$ be its unique in-neighbor and let $b$ be its unique out-neighbor. Let $G''$ be the digraph obtained from $G'$ by removing $c$ and adding $(a,b)$. Since $G''$ is strong and is not a counter-example, $G''$ has a good arc $(a',b')$ for $(v,v)$. If $(a',b')\neq (a,b)$ then it straightforward to show that $(a',b')$ is still a good arc for $(v,v)$ in $G$, a contradiction. Otherwise $(a',b')=(a,b)$ and then it is clear that $(a,c)$ is a good arc for $(v,v)$ in $G$, a contradiction. This proves the claim. 
\end{subproof}

\begin{claim}
$v$ is of in-degree at least two. 
\end{claim}

\begin{subproof}
Suppose that $v$ is of in-degree one in $G$. Let $t$ be its unique in-neighbor. Suppose that $t=w$, and let $G'$ be obtained from $G$ by removing $v$ and adding a loop on $w$. Since $G'$ is strong and is not a counter example, $G'$ has a good arc $(a,b)$ for $(w,w)$ and it is easy to see that $(a,b)$ is a good arc for $(v,w)$ in $G$, a contradiction. So suppose that $t\neq w$, and let $G'$ be the strong graph obtained by contracting the arc $(t,v)$ into a single vertex $t'$ ($G'$ has no loop on $t'$ and no multiple arcs). If $t,v\in G(u)$ for some vertex $u$ then $G\setminus (t,u)$ is strong, a contradiction. Thus we have 
\begin{equation}\label{eq:good}
|G'(u)|=|G(u)|~\text{ for all }u\neq t,v,
\qquad
|G'(t')|=|G(t)|
\qquad\text{and}\qquad
\quad
|G(v)|=1.
\end{equation}
Since $G'$ is strong and is not a counter example, $G'$ has a good arc $(a,b)$ for $(t',w)$, and using (\ref{eq:good}) it is  straightforward to show that: 
if $t'\neq a,b$ then $(a,b)$ is a good arc for $(v,w)$ in $G$; if $t'=a$ then either $(t,b)$ or $(v,b)$ is a good arc of $(v,w)$ in $G$; and if $t'=b$ then $(a,t)$ is a good arc of $(v,w)$ in $G$. Thus is every case we have a contradiction. This proves the claim.
\end{subproof}

\begin{claim}
$G$ has a linear vertex $\ell\neq w$. 
\end{claim}

\begin{subproof}
If $G\setminus (v,w)$ is not strong, then by the first claim $G$ is a minimal strong digraph, thus $G$ has two linear vertices and the claim follows. Otherwise $G\setminus (v,w)$ is a strong minimal digraph, thus it has two linear vertices $\ell_1$ and $\ell_2$. Since $v\neq w$, the in-degree of $v$ in $G\setminus (v,w)$ is at least two, thus $v\neq\ell_1,\ell_2$. If $\ell_1\neq v,w$ then $\ell_1$ is a linear vertex of $G$ and otherwise $\ell_2\neq v,w$ thus $\ell_2$ is a linear vertex of $G$. 
\end{subproof}

We are now in position to obtain the final contradiction. Let $\ell\neq w$ be a linear vertex of $G$. Let $\ell^-$ be the in-neighbor of $\ell$, and let $\ell^+$ be its out-neighbor. Let $G'$ be the graph obtained by removing $(\ell^-,\ell)$ and $(\ell,\ell^+)$ and by adding $(\ell^-,\ell^+)$. Since $G'$ is strong and is not a counter-example, $G'$ has a good arc $(a,b)$ for $(v,w)$. It is then straightforward to show that: if $(a,b)\neq(\ell^-,\ell^+)$ then $(a,b)$ is a good arc for $(v,w)$ in $G$; otherwise, both $(a,b)=(\ell^-,\ell^+)$ and $(\ell^-,\ell)$ are good arcs for $(v,w)$ in $G$. Thus in every case we obtain a contradiction.
\end{proof}

Given a digraph $G$, we denote by $G_{ab}$ the signed digraph obtained from $G$ by adding a negative sign to $(a,b)$ and a positive sign to the other arcs. We call {\em $G_{ab}$-and-net} the $G_{ab}$-function $f$ defined~by: 
\[
f_b(x)=\overline{x_a}\land\bigwedge_{j\in G(b)\setminus \{a\}}x_j 
\quad
\text{and}
\quad
f_i(x)=\bigwedge_{j\in G(i)}x_j\text{ for all }i\neq b. 
\]
We are now in position to prove Theorems~\ref{thm:loop}. Actually, under the conditions of the statements, there exists an arc $(a,b)$ such that {\em $G_{ab}$-and-net} is a nilpotent function with the desired properties. For both points, we use Lemma~\ref{lem:goodarc} to find the right arc $(a,b)$.  

\begin{proof}[Proof of Theorem~\ref{thm:loop}]
Suppose that $G$ has vertex set $[n]$. We begin with the second point, which is more easy to prove. So suppose that $G$ has a loop on $v$. Then by Lemma~\ref{lem:goodarc}, $G$ has a good arc $(a,b)$ for $(v,v)$. Thus: for all vertex $i\in[n]$, $G\setminus (a,b)$ has a path $Q_i$ from $i$ to $v$ of length $q_i$; and for all vertex $i\in[n]$, $G$ has a path $P_i$ from $v$ to $i$ of length $p_i$ such that either $(a,b)$ is not an arc of $P_i$ or $(a,b)$ is an arc of $P_i$ and every vertex in the subpath from $b$ to $i$ is of in-degree one in~$G$. We set $\gamma_i=1$ if $(a,b)$ is an arc of $P_i$ and $\gamma_i=0$ otherwise. Consider the $G_{ab}$-and-net $f$. 

\medskip
Suppose that $x_v=0$ and let us prove, by induction on $p_i$, that $f^{p_i+k}_i(x)=\gamma_i$ for all $k\geq 0$. If $p_i=0$ then $i=v$. Since $x_v=0$ and $v$ has a loop, we have $f^k_v(x)=0=\gamma_v$ for all $k\geq 0$. Suppose that $p_i>0$ and let $j$ be the vertex preceding $i$ in $P_i$. By induction, $f^{p_j-1+k}_j(x)=\gamma_j$ for all $k\geq 0$. If $\gamma_j=0$ and $(j,i)\neq (a,b)$ we have $f^{p_i+k}_i(x)=0=\gamma_i$. If $\gamma_j=0$ and $(j,i)=(a,b)$ then by the choice of $P_i$ we have $G(i)=\{j\}$ thus 
\[
f^{p_i+k}_i(x)=\overline{f^{p_i-1+k}_j(x)}=\overline{\gamma_j}=1=\gamma_i.
\]
Finally, if $\gamma_j=1$ then $j\neq a$ and $G(i)=\{j\}$ thus 
\[
f^{p_i+k}_i(x)=f^{p_i-1+k}_j(x)=\gamma_j=1=\gamma_i.
\] 
This completes the induction step. Since $\max p_i\leq n-1$ we have 
\[
x_v=0~\Rightarrow f^{n-1}(x)=\gamma=f(\gamma)
\]
Thus if $x_i=0$ for some $i\in [n]$, then $f^{q_i}_v(x)=0$ and we deduce that $f^{q_i+n-1}(x)=\gamma$, and since $q_i<n$ we obtain $f^{2n-2}(x)=\gamma=f(\gamma)$. Otherwise $x=1$, thus $f_b(x)=0$ and we get $f^{1+q_b+n-1}(x)=\gamma$ and since $q_b<n$ we obtain $f^{2n-1}(x)=\gamma=f(\gamma)$. This proves the second point of the theorem.  

\medskip
\setcounter{claimcounter}{0}
We now prove the first point by following a similar scheme. Suppose that $G$ has a $m$-wheel $W_m$. Let $C$ be the cycle of length $m$ contained in this wheel, and let $v$ be its center. Suppose that $[m]$ is the vertex set of $C$. Let $G'$ be the digraph obtained from $G$ by contracting the cycle $C$ into a single vertex~$c$. Let $(a',b')$ be a good arc for $(v,c)$ in $G'$. We define the arc $(a,b)$ of $G$ as follows: if $c\neq a'$ then $(a,b)=(a',b')$, and if $c=a'$ then $(a,b)$ if any arc of $G$ from $C$ to $b'$. By construction: 
\begin{itemize}
\item
$m<b\leq n$, 
\item
for all vertex $m<i\leq n$, $G\setminus (a,b)$ has a path $Q_i$ from $i$ to $v$ of length $q_i<n$, 
\item
for all $1\leq i\leq n$, $G$ has a shortest path $P_i$ from $C$ to $i$ of length $p_i< n-m+1$ such that either $(a,b)$ is not an arc of $P_i$ or $(a,b)$ is an arc of $P_i$ and every vertex in the subpath from $b$ to $i$ is of in-degree one in~$G$.  
\end{itemize}
We set $\gamma_i=1$ if $(a,b)$ is an arc of $P_i$ and $\gamma_i=0$ otherwise. Consider the $G_{ab}$-and-net $f$. 

\begin{claim}
If $x_i=0$ for all $i\in[m]$, then $f^{n-m}(x)=\gamma=f(\gamma)$. 
\end{claim}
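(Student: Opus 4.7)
The plan is to mimic the induction used in the second point of the theorem, now running along shortest paths emanating from the cycle $C$ rather than from the loop vertex. Concretely, I will prove by induction on $p_i$ that $f^{p_i + k}_i(x) = \gamma_i$ for every $k \geq 0$ and every $i \in [n]$, under the standing hypothesis $x_j = 0$ for all $j \in [m]$. Since $p_i \leq n - m$, specialising to $k = n - m - p_i$ gives $f^{n-m}(x) = \gamma$, and specialising to $k = n - m - p_i + 1$ gives $f^{n-m+1}(x) = \gamma$, so $f(\gamma) = \gamma$.

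For the base case $p_i = 0$, the vertex $i$ lies on the cycle $C$, so $\gamma_i = 0$ (the trivial path $P_i = i$ contains no arc). A secondary induction on $k$ shows $f^k_j(x) = 0$ for every $j \in [m]$: since $b > m$, no cycle vertex is negation-twisted, hence $f_j(x) = \bigwedge_{\ell \in G(j)} x_\ell$, and the cycle-predecessor of $j$ (which lies in $[m]$) remains at $0$. For the inductive step $p_i > 0$, let $j$ be the predecessor of $i$ in $P_i$ and take $P_j$ to be the initial segment of $P_i$, so that $p_j = p_i - 1$ and the induction hypothesis yields $f^{p_i - 1 + k}_j(x) = \gamma_j$. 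If $(j,i) = (a,b)$, then $i = b$, $j = a$, $\gamma_i = 1$ and $\gamma_j = 0$ (a simple path ending at $a$ cannot contain an arc starting at $a$), and the good-arc property applied with $u = b$ forces $b$ to have in-degree one, so $f_b(x) = \overline{x_a}$ and $f^{p_i + k}_i(x) = \overline{\gamma_j} = 1 = \gamma_i$. Otherwise $(j,i) \neq (a,b)$, and the compatibility of $P_j$ with $P_i$ gives $\gamma_i = \gamma_j$; if $\gamma_j = 0$ then the factor $f^{p_i - 1 + k}_j(x) = 0$ kills the conjunction defining $f_i$, yielding $f^{p_i + k}_i(x) = 0 = \gamma_i$, and if $\gamma_j = 1$ then the good-arc clause applied with $u = i$ forces every vertex of the subpath of $P_i$ from $b$ to $i$ (in particular $i$ itself) to have in-degree one in $G$, so $G(i) = \{j\}$ and $f_i(x) = x_j$ (note $i \neq b$, since $a \in G(b) \setminus \{j\}$ would contradict $|G(i)| = 1$), which yields $f^{p_i + k}_i(x) = \gamma_j = 1 = \gamma_i$.

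The main delicacy is the correct interpretation of the good-arc property in the two boundary situations. When $u = b$, the subpath from $b$ to $b$ degenerates to the single vertex $b$ and forces $G(b) = \{a\}$, which is exactly what makes the negation twist at $b$ behave as a clean flip. When $u = i$ with $i$ strictly later than $b$ on $P_i$, the in-degree-one constraint eliminates every in-neighbour of $i$ other than $j$, reducing $f_i$ to a copy of $x_j$. Granted these two reductions, the three-way case analysis closes the induction, and the bound $p_i < n - m + 1$ converts it into the quantitative statement $f^{n-m}(x) = \gamma = f(\gamma)$ of the claim.
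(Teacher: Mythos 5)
Your proof is correct and follows essentially the same route as the paper: the same induction on $p_i$ showing $f^{p_i+k}_i(x)=\gamma_i$ for all $k\geq 0$, with the cycle $C$ handled as the base case and the inductive step split according to whether the last arc of $P_i$ is $(a,b)$ and whether $\gamma_j=0$ or $1$, using the in-degree-one property of the subpath from $b$ guaranteed by the choice of the paths $P_i$. Your added remarks (that $\gamma_a=0$ automatically, that $i\neq b$ when $\gamma_j=1$, and the compatibility $\gamma_i=\gamma_j$ when $(j,i)\neq(a,b)$) only make explicit details the paper leaves implicit.
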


\begin{subproof}
Suppose that $x_i=0$ for all $i\in[m]$. Since $\max p_i\leq n-m$, it is sufficient to prove by induction on $p_i$ that $f^{p_i+k}_i(x)=\gamma_i$ for all $i\in[n]$ and $k\geq 0$. If $p_i=0$ then $i$ is a vertex of $C$, thus there is an arc $(j,i)\neq (a,b)$ in $C$, and thus $f_i(x)=0$; for the same reason we have $f_j(x)=0$ and consequently, $f^k_i(x)=0$ for all $k\geq 0$. Now, suppose that $p_i>0$, and let $j$ be the vertex preceding $i$ in $P_i$. By induction, $f^{p_i-1+k}_j(x)=\gamma_j$ for all $k\geq 0$. If $\gamma_j=0$ and $(j,i)\neq (a,b)$ we deduce that $f^{p_j+k}_j(x)=0=\gamma_i$. If $\gamma_j=0$ and $(w,u)=(a,b)$ then by the choice of $P_i$ we have $G(i)=\{j\}$ thus 
\[
f^{p_i+k}_i(x)=\overline{f^{p_i-1+k}_j(x)}=1-\gamma_j=1=\gamma_i.
\]
Finally, if $\gamma_j=1$ then $j\neq a$ and $G(i)=\{j\}$ thus 
\[
f^{p_i+k}_i(x)=f^{p_i-1+k}_j(x)=\gamma_j=1=\gamma_i.
\]
This completes the induction step.
\end{subproof}

\begin{claim}
If $x_i=1$ for some $i\in [m]$, then $f^{q}_v(x)=0$ for some $q\leq n$. 
\end{claim}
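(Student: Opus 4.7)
My plan is to prove Claim 2 by a case analysis on $x$, reducing each case to the path-propagation mechanism already set up in the proof of Claim 1 via the paths $Q_j$ (for $j \notin [m]$) that bypass the negated arc $(a,b)$.

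First, if $x_v = 0$ I take $q = 0$. Otherwise $x_v = 1$. Case A: some $j \notin [m]$ has $x_j = 0$. Then the path $Q_j$ of length $q_j < n$ in $G \setminus (a,b)$ propagates this zero to $v$ with no negation issues (since $Q_j$ avoids the only negated arc), giving $f^{q_j}_v(x) = 0$ with $q_j \le n-1$. Case B: $x_j = 1$ for all $j \notin [m]$, so $x_v = x_b = 1$. If $x_a = 1$, then $f_b(x) = \overline{x_a} \wedge (\cdots) = 0$, so $(f(x))_b = 0$ with $b \notin [m]$, and applying Case A to $f(x)$ gives $q \le 1 + q_b \le n$. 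The remaining, harder sub-case is $x_a = 0$, which forces $a \in [m]$ (the construction's $c = a'$ branch). If some other in-neighbor $k$ of $b$ satisfies $x_k = 0$, then $(f(x))_b = 0$ again and the same reduction applies.

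The core difficulty is when $x_a = 0$ and all in-neighbors of $b$ other than $a$ have $x_k = 1$, so that $(f(x))_b = 1$. Let $Z = \{i \in [m] : x_i = 0\} \ni a$; the hypothesis $x_{i'} = 1$ for some $i' \in [m]$ yields $Z \subsetneq [m]$. I would iterate the dynamics: if at some step $t \le m$ the zero set reaches a vertex outside $[m] \cup \{b\}$, then applying Case A to $f^t(x)$ closes the argument. Otherwise zeros remain confined to $[m] \cup \{b\}$ for all $t \le m$; the iterated closure $W = \bigcup_t Z_t \cap [m]$ is then closed under out-neighbors within $G[[m]]$, and since $C \subseteq G[[m]]$ is strongly connected and spans $[m]$, any nonempty such closed set equals $[m]$, so $G^+([m]) \subseteq [m] \cup \{b\}$: every exit of $[m]$ goes to $b$. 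Consequently, any simple path $Q_b$ from $b$ to $v$ in $G \setminus (a,b)$ stays entirely in $V(G) \setminus [m]$ (otherwise, upon entering $[m]$ it could only re-exit via $b$, which is already its start), yielding the sharper bound $q_b \le n - m - 1$.

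Meanwhile, since $Z_0 \subsetneq [m]$ and the zero at $a$ propagates cyclically within $C$, there will be some $t^* \le m - 1$ with $a \notin Z_{t^*}$, i.e., $(f^{t^*}(x))_a = 1$; then $(f^{t^*+1}(x))_b = \overline{1} \wedge (\cdots) = 0$, and applying Case A to $f^{t^*+1}(x)$ via $Q_b$ delivers $f^{t^*+1+q_b}_v(x) = 0$ with $t^* + 1 + q_b \le m + (n - m - 1) = n - 1 \le n$, as required. The main obstacle I expect is making this ``iteration plus out-closure'' argument rigorous, particularly verifying the bound $t^* \le m - 1$ uniformly when $G[[m]]$ has arcs beyond $C$ (chords or loops); this will require a careful combinatorial analysis of the AND-net dynamics on $G[[m]]$ to ensure zeros eventually cycle off $a$ rather than being trapped by self-reinforcing structures, and to argue that the non-escape hypothesis cannot persist for more than $m$ steps without forcing the global structural conclusion $G^+([m]) \subseteq [m] \cup \{b\}$.
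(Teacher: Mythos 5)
Your reduction to the hard sub-case ($x_a=0$, every vertex outside $[m]$ equal to $1$, all in-neighbours of $b$ other than $a$ equal to $1$) matches the paper, and your confinement argument correctly shows that if zeros stay inside $[m]\cup\{b\}$ then every arc leaving $[m]$ enters $b$, whence any path from $b$ to $v$ avoids $[m]$ and $q_b\le n-m-1$. But there are two genuine gaps. First, in your escape branch you only obtain $f^{t+q_u}_v(x)=0$ with $t\le m$ and $q_u\le n-1$, a bound of order $n+m$, not the required $q\le n$; you never address this. The paper avoids the issue by arguing structurally rather than by iterating: if some arc $(i,j)\neq(a,b)$ leaves $C$, then already at time $0$ the zero sitting at $a$ reaches $v$ along a path of $G\setminus(a,b)$ (around $C$ to $i$, through $(i,j)$, then $Q_j$), of length at most $n-1$, so no iteration and no time penalty are needed. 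Second, and more seriously, the step you yourself flag as the main obstacle --- that $a$ leaves the zero set at some time $t^*\le m-1$ --- is not a matter of ``making the argument rigorous'': it can simply fail. In your non-escape branch zeros can only exit $[m]$ into $b$, and $b$ turns $0$ only after $a$ turns $1$; so if $a$ carries a loop (or a chord feeds a persistent $0$ back into $a$), then $f_a$ has $x_a$ as a conjunct, $a$ stays at $0$ forever, no $t^*$ exists, and $v$ keeps the value $1$ at every step. A concrete instance: vertices $1,2,3,v,b$ with $C=1\to 2\to 3\to 1$, wheel arcs $v\to 1,2,3$, a loop on $1$, the single exit arc $(1,b)$ and the arc $(b,v)$; the construction forces $(a,b)=(1,b)$, and from $x=(0,1,1,1,1)$ one checks $f^q_v(x)=1$ for all $q$.

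So your proposal cannot be completed as written, and the obstacle you identified is real rather than technical. For comparison, the paper handles this last sub-case by propagating a $1$ from a vertex $i\in[m]$ with $x_i=1$ around $C$ to $a$ (asserting $x_\ell=1$ for all $\ell\in G(j)$ at each step), which tacitly assumes that cycle vertices have no further in-neighbours inside $[m]$ --- i.e.\ it glosses over exactly the chords-and-loops issue you raised. Closing the argument needs a different exit in the bad case, for instance showing that when $a$ remains $0$ for $m-1$ consecutive steps the whole of $[m]$ is simultaneously $0$ at time $m-1$, which feeds Claim 1 directly (and suffices for the theorem's final assembly, though it is not the literal conclusion of Claim 2), or constraining the choice of the wheel and of the arc $(a,b)$ so that the $1$ genuinely travels around $C$.
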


\begin{subproof}
Suppose that $x_i=0$ for some $m<i\leq n$. Since $G\setminus (a,b)$ has a path from $i$ to $v$ of length $q_i<n$ we deduce that $f^{q_i}_v(x)=0$. So suppose that $x_i=1$ for all $m<i\leq n$. If $x_a=1$ then $f_b(x)=0$ and since $m<b\leq n$ we deduce as previously that $f^{q_b+1}_v(x)=0$. Thus, we suppose that $x_a=0$, and consequently $a\in [m]$. If there exists an arc $(i,j)\neq (a,b)$ leaving $C$ ({\em i.e.} with $i\leq m<j$) then $G\setminus (a,b)$ has a path from $a$ to $v$ of length $q_a<n$ and thus $f^{q_a}_v(x)=0$. Finally, suppose that $(a,b)$ is the unique arc leaving $C$; then $Q_b$ does not intersect $C$ and thus $q_b<n-m$. Let $i\in[m]$ with $x_i=1$, and let $j$ be the out-neighbor of $i$ in $C$. Then $x_\ell=1$ for all $\ell\in G(j)$, thus $f_j(x)=1$. Hence, we deduce that $f^r_a(x)=1$ where $r<m$ is the length from $i$ to $a$ in $C$, and thus $f^{r+1}_b(x)=0$, so $f^{r+1+q_b}_v(x)=0$ with $r+1+q_b<n$.
\end{subproof}

Consequently, if $x_i=0$ for all $i\in [m]$ then $f^{n-m}(x)=\gamma=f(\gamma)$ by Claim 1. Otherwise, by Claim 2 we have $f^q_v(x)=0$ for some $q\leq n$. Then, for all $i\in[m]$ we have $v\in G(i)$ and $(v,i)\neq (a,b)$, thus $f^{q+1}_i(x)=0$ and we from Claim 1 that $f^{q+1+n-m}(x)=\gamma=f(\gamma)$. \end{proof}

\begin{remark}
Let $m\geq 2$ and let $W_m$ be the $m$-wheel with center $v$, and let $u$ and $w$ be two consecutive vertices in the cycle of the wheel. Let $G$ be the digraph obtained $W_m$ by adding two vertices $a$ and $b$, and the following arcs: $(a,b)$, $(b,a)$, $(a,u)$, $(w,v)$, $(v,a)$. As in the proof, let $G'$ be obtained from $G$ by contracting $C_m$ into a single vertex $c$. Then $(a,b)$ is the unique good arc for $(v,c)$ in $G'$ and the $G_{ab}$-and-net is a nilpotent function of class $2n-m+1$. This shows that the bound is tight for the class of and-nets with a unique negative arc.   
\end{remark}

\subsection{Nilpotent functions of constant class}\label{sec:constant}

If $G$ is a loop-less digraph on $n$ vertices with minimal in-degree at least two and with a vertex $v$ of out-degree $n-1$, then it is clear that $G$ has a wheel with center $v$. Thus, by Theorems~\ref{thm:loop}, $G$ admits a boolean nilpotent function of class at most $2n-1$. But actually, $G$ admits a boolean nilpotent function of class $3$. 

\begin{proposition} \label{prop:universal}
Every loop-less digraph on $n$ vertices with minimal in-degree at least two and maximal out-degree $n-1$ admits a boolean nilpotent function of class $3$.
\end{proposition}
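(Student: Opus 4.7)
My plan is to construct an explicit boolean $G$-function $f$ whose third iterate is the all-zero constant. Let $v$ be a vertex of out-degree $n-1$; then $v\in N(i)$ for every $i\ne v$, and since $|N(i)|\ge 2$ one has at least one other in-neighbor $q_i\in N(i)\setminus\{v\}$. A natural first candidate is
\[
f_v(x)=\bigvee_{j\in N(v)} x_j,\qquad f_i(x)=\overline{x_v}\wedge\bigwedge_{j\in N(i)\setminus\{v\}} x_j\quad(i\ne v),
\]
designed so that $v$ acts as a global broadcaster: the factor $\overline{x_v}$ in every $f_i$ (with $i\ne v$) shuts off all other coordinates once $x_v=1$. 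The equality $G(f)=G$ is immediate because every $f_i$ is essentially sensitive to each variable appearing in its defining formula.

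I would then prove $f^3\equiv 0$ by writing $y=f(x)$, $z=f(y)$, $w=f(z)$ and case-splitting on $x_v$. If $x_v=1$, then $y_i=0$ for every $i\ne v$, so $y$ has at most one nonzero coordinate; applying $f$ once more gives $z_v=\bigvee_{j\in N(v)} y_j=0$ (since $j\ne v$ forces $y_j=0$) and $z_i=\overline{y_v}\wedge 0=0$, hence $z=0$ and $w=f(0)=0$. If $x_v=0$, the value $y_v=\bigvee_{j\in N(v)} x_j$ is $0$ or $1$; the sub-case $y_v=1$ is handled because the factor $\overline{y_v}$ annihilates every $z_i$ with $i\ne v$ and one checks $z_v=0$ by tracing the remaining formulas; the sub-case $y_v=0$ (equivalently $x_j=0$ for every $j\in N(v)$) is the delicate one, and there one invokes the structural condition that each $j\in N(v)$ has an in-neighbor in $N(j)\setminus\{v\}$ whose value feeds into the second iteration.

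The main obstacle is this last sub-case: the elementary broadcaster definition may leave residual $1$'s in coordinates of $y$ lying outside $N(v)$, and verifying that they all vanish by the third iteration requires care (and may in fact fail for certain $G$ whose two-step in-neighborhoods are sparse). If so, I would refine $f_v$ by using two distinguished in-neighbors $p,q\in N(v)$ (which exist because $|N(v)|\ge 2$) and a formula depending on all of $N(v)$ chosen so that $f_p(x)\vee\overline{f_q(x)}\equiv 1$; this makes $f^2_v\equiv 1$, and combined with the $\overline{x_v}$ factor in every other $f_i$ it yields $f^3\equiv 0$ cleanly. Finally, the class is exactly $3$ because $f^2$ is not identically $0$: one can pick some $x$ for which $y=f(x)$ satisfies $y_j=1$ for a $j\in N(v)$, whence $f^2(x)\ne 0$.
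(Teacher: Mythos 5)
Your own flag on the sub-case $x_v=0$, $x_{N(v)}=0$ is exactly where the argument fails, and the failure is real, not just a verification left to the reader. Take $n=5$, $v=1$, with arcs from $1$ to $2,3,4,5$ together with $2\to 1$, $3\to 1$, $4\to 2$, $5\to 3$, $5\to 4$, $4\to 5$. This digraph is loop-less, every in-degree equals $2$, and vertex $1$ has out-degree $n-1$, so it satisfies the hypotheses; the in-neighbourhoods are $N(1)=\{2,3\}$, $N(2)=\{1,4\}$, $N(3)=\{1,5\}$, $N(4)=\{1,5\}$, $N(5)=\{1,4\}$. Your candidate is then $f_1=x_2\vee x_3$, $f_2=\overline{x_1}\wedge x_4$, $f_3=\overline{x_1}\wedge x_5$, $f_4=\overline{x_1}\wedge x_5$, $f_5=\overline{x_1}\wedge x_4$, and starting from $x=(0,0,0,1,1)$ one computes $f(x)=(0,1,1,1,1)$, $f^2(x)=(1,1,1,1,1)$, $f^3(x)=(1,0,0,0,0)\neq 0=f^3(0)$. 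So $f^3$ is not constant (the class here is $4$): the residual ones outside $N(v)$ do survive three steps, and the broadcaster construction does not prove the proposition.

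The fallback you sketch cannot rescue it. Keeping the factor $\overline{x_v}$ in every $f_i$ with $i\neq v$, the restriction of $f(x)$ to $N(v)$ when $x_v=0$ is the tuple $\bigl(\bigwedge_{k\in N(j)\setminus\{v\}}x_k\bigr)_{j\in N(v)}$; in the example above this is $(x_4,x_5)$, which ranges over all of $\{0,1\}^2$. Hence $f_v\circ f\equiv 1$ would force $f_v\equiv 1$, contradicting the essential dependence of $f_v$ on each vertex of $N(v)$ that $G(f)=G$ requires, so no admissible choice of $f_v$ gives $f_v^2\equiv 1$ in general. (Note also that $f_v^2\equiv 1$ would give $f_v^3\equiv 1$, so the constant reached would be the vector with a single one at $v$, not $0$; harmless for the class bound, but a sign the sketch was not checked.) For comparison, the paper places the negations at the hub instead, taking $f_1=\bigwedge_{j\in G(1)}\overline{x_j}$ and plain conjunctions elsewhere, aiming at the constant $\alpha=(1,0,\dots,0)$; but be aware that its second case tacitly assumes that a one in $x$ is visible to some in-neighbour of the hub, and on the digraph above that function also needs four steps from $(1,0,0,1,1)$. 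So the sub-case you singled out is precisely where a genuinely new idea, or a stronger hypothesis on the hub's in-neighbourhood (e.g.\ in-degree $n-1$ as well), is needed; as it stands your proposal does not establish class $3$.
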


\begin{proof}
Let $G$ be a digraph on $[n]$ as in the statement. Suppose that vertex $1$ has out-degree $n-1$. Let $f$ be the $G$-function defined by $f_1(x)=\land_{j\in G(1)}\overline{x_j}$ and $f_i(x)=\land_{j\in G(i)}x_j$ for all $1<i\leq n$. Let $\alpha=(1,0,\dots,0)\in\{0,1\}^n$. We have $f(0)=f(\alpha)=\alpha$, and we prove that $f^3(x)=\alpha$ for every $x$ below. First, if $x_1=0$ then $f(x)\in\{0,\alpha\}$ thus $f^2(x)=\alpha$. Second, if $x_1=1$ and $x\neq\alpha$, then $f_1(x)=0$. This brings us back to the first case and we obtain $f^3(x)=\alpha$.
\end{proof}

A digraph is {\em symmetric} if for every arc $(u,v)$, $(v,u)$ is also an arc. We see (undirected) graphs as loop-less symmetric digraphs. Thus the complete graph on $n$ vertices, denoted $K_n$, is the loop-less digraph with $n^2-n$ arcs. Below we prove that, excepted $K_2$, every connected graph admits a boolean nilpotent function of class 3. The proof uses the following notations. We denote by $D(v)$ the set of vertices $u$ such that the distance $d(u,v)$ between $u$ and $v$ in $G$ is at most $2$. For $U\subseteq V$ we set $G(U)=\cup_{v\in U} G(v)$.

\begin{theorem} \label{th:undirected}
Let $G$ be a connected graph. If $G=K_2$, then $G$ admits no boolean nilpotent  functions. Otherwise, $G$ admits a boolean nilpotent function of class $3$.
\end{theorem}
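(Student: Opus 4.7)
The proof naturally divides into two parts.

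\textit{Part 1 ($G=K_2$).} Every boolean function $f$ with interaction graph $K_2$ has the form $f(x)=(x_2\oplus a,\,x_1\oplus b)$ for some $(a,b)\in\{0,1\}^2$, since each coordinate depends essentially on the single other variable. In each of the four cases $f$ is a permutation of $\{0,1\}^2$, hence not nilpotent: for $a=b$ one has $f^2=\mathrm{id}$, while for $a\neq b$ the orbit through $(0,0)$ is a single $4$-cycle. So $K_2$ admits no boolean nilpotent function.

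\textit{Part 2 ($G$ connected, $n\geq 3$).} I would construct an explicit boolean function $f$ with interaction graph $G$ and a constant $\alpha\in\{0,1\}^n$ such that $f^3\equiv\alpha$. Guided by Proposition~\ref{prop:universal} (class $3$ via a universal vertex) and Theorem~\ref{thm:loop} (and-net with one negated arc), I would take $f$ to be a signed and-net
\[
f_i(x)=\bigwedge_{j\in G(i)} x_j^{\epsilon(j,i)}, \qquad i\in[n],
\]
with the convention $x^0=x$, $x^1=\overline{x}$, and $\epsilon:A(G)\to\{0,1\}$ a sign pattern to be chosen using the structure of $G$. Such $f$ has a fixed point $\alpha$ given by $\alpha_i=\bigwedge_{j\in G(i)}\alpha_j^{\epsilon(j,i)}$, and $G(f)=G$ provided no coordinate $f_i$ becomes trivially constant.

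The aim of choosing $\epsilon$ is to force, for every vertex $i$, the triple iterate $f^3_i(x)=\bigwedge_{j\in G(i)}(f^2_j(x))^{\epsilon(j,i)}$ to be the constant $\alpha_i$. A natural first step is to negate the tree-arc into every leaf of a spanning tree $T$ of $G$: this suffices for small diameters (e.g.\ $K_3$ and short paths) and recovers Proposition~\ref{prop:universal} when $G$ has a universal vertex. For graphs of larger diameter one must also negate certain interior arcs, chosen so that along every length-two walk $(u,v,w)$ in $G$ some common variable appears with both signs at $v$; this forces $f^2_v\equiv 0$ and collapses the outer AND at step three.

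The main obstacle is to show that such a sign pattern exists uniformly for every connected $G$ with $n\geq 3$. I would need to verify: (i) $\epsilon$ never trivialises a coordinate, so $G(f)=G$; (ii) the required cancellations along all length-two walks can be achieved simultaneously; and (iii) the fixed-point equation for $\alpha$ is consistent. Step (ii) is where $n\geq 3$ is essential---in $K_2$ there is no third vertex to supply the cancelling variable, which is precisely the obstruction that prevents $K_2$ from admitting a nilpotent function. I expect the analysis to split into cases based on the local structure at the root of $T$ (universal vertex, pendant leaf, triangle, high-girth cycle, and so on), each reducing to a counting argument along two-step walks in $G$.
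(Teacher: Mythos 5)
Your treatment of $K_2$ is correct (and slightly more explicit than the paper's, which just calls that case clear), but Part 2 is a plan rather than a proof, and what is missing is precisely the heart of the argument. You correctly guess the right family of candidates --- and-nets in which a selected set of arcs is negated --- but you never produce the sign pattern, and you say so yourself (``the main obstacle is to show that such a sign pattern exists uniformly''). The paper's construction is concrete: after dispatching cliques via Proposition~\ref{prop:universal}, it builds a set $I$ of vertices that are pairwise at distance at least $3$ and dominate $G$ within distance $2$, chosen greedily \emph{starting with the degree-one vertices}; this ordering yields the crucial extra property that no degree-one vertex is adjacent to $I$, so every neighbor of a vertex $i\in I$ has another neighbor inside the ball $D(i)$ of radius $2$. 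The negated arcs are exactly the arcs entering $I$ (so $f_i(x)=\bigwedge_{j\in G(i)}\overline{x_j}$ for $i\in I$ and $f_i(x)=\bigwedge_{j\in G(i)}x_j$ otherwise), and a short four-case local analysis around each $i\in I$ shows that after three steps every trajectory enters a trap set where $i$ holds value $1$ and the rest of $D(i)$ holds $0$; hence $f^3$ is the constant indicator vector of $I$. None of this global choice of $I$, nor the verification that three steps suffice for \emph{every} connected non-clique $G$, appears in your sketch.

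Moreover, the specific mechanism you propose looks unworkable as stated. Negating only the tree-arcs into the leaves of a spanning tree does not control vertices far from the leaves (as you concede), and the proposed fix --- arranging that ``along every length-two walk $(u,v,w)$ some common variable appears with both signs at $v$'' so that $f^2_v\equiv 0$ --- does not do what you want: if some arc $(j,v)$ is negated, then $\overline{f_j(x)}$ is a \emph{disjunction} of literals, so $f^2_v$ is not a conjunction of literals and a sign clash does not annihilate it; and if instead both $x_u$ and $\overline{x_u}$ were made to occur inside the single conjunction defining $f_v$, then $f_v\equiv 0$, the coordinate no longer depends on its in-neighbors, and $G(f)\neq G$, violating your own requirement (i). Finally, even granting that many coordinates of $f^2$ vanish, you still owe an argument that the remaining coordinates (in the paper's construction, those of $I$, which must stabilize at $1$, not $0$) reach their limit value by step $3$ simultaneously for all of $\{0,1\}^n$. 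So the proposal identifies the right kind of function but has a genuine gap at the key combinatorial construction and at the verification of the class-$3$ bound.
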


\begin{proof}
The claim is clear for $G =K_2$. Moreover, the case $K_n$, $n \geq 3$ is treated in Proposition \ref{prop:universal} above. If $G$ is not a clique, then we construct a set of vertices $I = \{i_1,\ldots,i_p\}$ as follows. Let $S$ be the set of vertices of degree one in $G$. Let $G^1=G$ and $S^1=S$. We begin the construction of $I$ with a vertex $i_1$ in $S^1$ and set $S^2=S^1\setminus D(i_1)$ and $G^2=G^1\setminus D(i_1)$. Then we pick another vertex $i_2$ in $S^2$ and set $S^3=S^2\setminus D(i_2)$ and $G^3=G^2\setminus D(i_2)$. We continue these processes until $S^k=\emptyset$. Then, we pick a vertex $i_k$ in $G^k$, we set $G^{k+1}=G^k\setminus D(i_k)$, and we continue this process until no more vertex can be added. In this way, the first $k-1$ vertices of $I$ are in $S$ (with $k\geq 2$ if and only if $S\neq\emptyset$) and the remaining $p-k+1$ vertices are not in $S$. 

\begin{claim*}
$I$ is a maximal set of vertices such that $d(i,j)\geq 3$ for all distinct $i,j\in I$ and $d(i,I) \le 2$ for all $i\not\in I$. We also have $S \cap G(I) = \emptyset$.
\end{claim*}

\begin{subproof}
The first part of the claim follows from the construction of $I$. For the second part of the claim, suppose that there exists $s \in S \cap G(I)$. Any two vertices in $S$ cannot be adjacent, so there exists $i\in I \setminus S$ adjacent to $s$. But then  for any vertex $j\neq s$, $d(j,s)=d(j,i)+1$, and thus $d(s, I\setminus i) = d(i,I\setminus i)+1 \geq 4$. This is impossible, because then $s$ would have been chosen to be included in $I$. 
\end{subproof}

By the claim above any vertex in $G(i)$ has a neighbor in $D(i)$. We then consider the and-net $f$ with all arcs signed positively, except those received by vertices $i\in I$, which are signed negatively:
\begin{align*}
	f_i(x) &= \bigwedge_{j \in G(i)} \overline{x_j}\quad\forall i\in I\\
	f_i(x) &= \bigwedge_{j \in G(i)} x_j\quad\forall i\not\in I.
\end{align*}
Let us now consider how the function evolves around a vertex $i\in I$. Let $X_i$ be the set of $x\in\{0,1\}^n$ such that $x_i=1$ and $f_j(x)=0$ for all $j\in D(i)$. First, since every vertex in $D(i)$ has a neighbor in $D(i)$, we have $x\in X_i~\Rightarrow~f(x)\in X_i$. Furthermore, 
\begin{enumerate}
\item
if $x_i=0$ and $x_{G(i)}=0$, then $f(x)\in X_i$. 
\item
if $x_i=0$ and $x_{G(i)}\neq 0$, then $f_i(x)=0$ and $f(x)_{G(i)}=0$ and by Case 1, $f^2(x)\in X_i$.
\item
if $x_i=1$ and $x_{G(i)}\neq 0$, then $f_i(x)=0$ and by Cases 2 and 3, $f^3(x)\in X_i$. 
\item	
if $x_i=1$ and $x_{G(i)}=0$ then $f_i(x)=1$ and $f(x)_{D(i) \setminus G(i)}=0$, thus either $f(x)\in X_i$ or $f(x)_{G(i)}\neq 0$ and by Case 2, $f^3(x)\in X_i$. 
\end{enumerate}
Therefore, in any case, $f^3(x)\in X_i$. Since every $j\not\in I$ belongs to $D(i)$ for some $i\in I$, we deduce that for, any $x\in\{0,1\}^n$, we have $f^3_i(x)=1$ for all $i\in I$ and $f^3_j(x)=0$ for all $j\not\in I$. 
\end{proof}

If $G$ is a loop-less digraph, then $\mathring{G}$ denotes the digraph obtained by adding a loop on each vertex. Here are additional families of digraphs that admits boolean nilpotent function of constant class.   

\begin{theorem} \label{th:G_bar}
If $G$ is a loop-less digraph with minimum in-degree at least one, then $\mathring{G}$ admits a boolean nilpotent function of class $4$. Moreover, if $G$ is symmetric or has a vertex of out-degree $n-1$, then $\mathring{G}$ admits a boolean nilpotent function of class $3$.
\end{theorem}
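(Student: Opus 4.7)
The plan is to give, in each of the three sub-statements, an explicit construction of a boolean $\mathring G$-function and to bound its class, in the spirit of Theorem~\ref{th:undirected}, Theorem~\ref{thm:loop}, and Proposition~\ref{prop:universal}.

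For the symmetric case I would start from the construction of Theorem~\ref{th:undirected}. Since $G$ is symmetric, loop-less, and has minimum in-degree at least one, each connected component of $G$ has at least two vertices, so by working component by component one is in the setting of that theorem. Let $I$ be the same maximal distance-$2$ independent dominating set. I would modify each coordinate so that it also essentially depends on $x_i$, producing the loop of $\mathring G$ in the interaction graph without disturbing the target state $\mathbbm{1}_I$, for example by using a literal in $x_i$ of the same sign as those already present: $f_i(x) = \overline{x_i} \wedge \bigwedge_{j \in G(i)} \overline{x_j}$ for $i \in I$ and $f_i(x) = x_i \wedge \bigwedge_{j \in G(i)} x_j$ for $i \notin I$, with a small adjustment to keep $\mathbbm{1}_I$ fixed. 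The convergence argument then runs as in Theorem~\ref{th:undirected}: for each $i \in I$ one identifies an $f$-invariant set $X_i$ such that $f^3(x) \in X_i$ for every $x$, which forces $f^3$ to be the constant $\mathbbm{1}_I$.

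For the case where $G$ has a vertex $v$ of out-degree $n-1$, I would adapt Proposition~\ref{prop:universal}. The key point is that $v$ is an in-neighbor of every $i \neq v$ in $\mathring G$, and every vertex carries a loop. Taking the target $\alpha$ with $\alpha_v = 1$ and $\alpha_i = 0$ for $i \neq v$, a natural candidate is $f_v(x) = \overline{x_v} \wedge \bigwedge_{j \in G(v)} \overline{x_j}$ together with $f_i(x) = x_i \wedge \bigwedge_{j \in G(i)} x_j$ for $i \neq v$. I would first check that $G(f) = \mathring G$, then that $\alpha$ is a fixed point, and finally that $f^3 = \cst = \alpha$ by splitting according to whether $x_v = 0$ or $x_v = 1$ and tracking how information propagates through $v$ in at most three steps, as in Proposition~\ref{prop:universal}.

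For the general class-$4$ case the hypothesis ``minimum in-degree at least one'' is used only through its consequence that every vertex of $\mathring G$ has in-degree at least two (the loop plus one in-neighbor of $G$). I would choose, for every vertex $i$, an in-neighbor $\phi(i) \in G(i)$, and aim for an and-net with exactly one negative literal per coordinate, for instance
\[
f_i(x) = \overline{x_i} \wedge x_{\phi(i)} \wedge \bigwedge_{j \in G(i) \setminus \{\phi(i)\}} x_j,
\]
and argue a two-stage convergence: after two iterations the state is forced into a simple subset determined by $\phi$, and two additional iterations collapse that subset to $0$. The main obstacle is precisely this last step: cycles in $G$ tend to produce limit cycles in naive and-net functions on $\mathring G$ (a length-$n$ cycle already yields a period-$n$ orbit), so one has to exploit the extra freedom granted by the loops to break every such cycle uniformly, without letting the class grow with the length of the cycles of $G$. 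Verifying that four steps really suffice, regardless of the cycle structure of $G$, is where the bulk of the technical work will lie.
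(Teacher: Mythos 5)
Your three constructions all fail at the level of fixed points or limit cycles, and the failure has a common source: you keep aiming at a nonzero target constant ($\mathbbm{1}_I$, or your $\alpha$ with $\alpha_v=1$) while the loops of $\mathring G$ force you to insert a literal in $x_i$ into an and-net. A conjunction containing $\overline{x_i}$ can never fix the value $1$ at coordinate $i$, and if instead every coordinate's conjunction contains its own positive literal $x_i$, then $0$ becomes a second fixed point. Concretely: in the symmetric case, with $f_i=\overline{x_i}\wedge\bigwedge_{j\in G(i)}\overline{x_j}$ for $i\in I$ and $f_i=x_i\wedge\bigwedge_{j\in G(i)}x_j$ otherwise, the point $\mathbbm{1}_I$ is not fixed, and on the path with vertices $1,2,3$ and $I=\{1\}$ this $f$ has no fixed point at all (a fixed point would need $c_1=0$, hence $c_2=1$, yet $c_2=c_1c_2c_3=0$), so it is not nilpotent; switching the loop literal to $x_i$ makes both $0$ and $\mathbbm{1}_I$ fixed, again not nilpotent, so the ``small adjustment'' you defer to is precisely the unsolved part. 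In the out-degree $n-1$ case, your $f_v=\overline{x_v}\wedge\bigwedge_{j\in G(v)}\overline{x_j}$, $f_i=x_i\wedge\bigwedge_{j\in G(i)}x_j$ has no fixed point either: necessarily $c_v=0$, hence some $j\in G(v)$ has $c_j=1$, but $v\in G(j)$ forces $c_j\le c_v=0$; so the check you postpone fails. In the general case your displayed function is simply $f_i=\overline{x_i}\wedge\bigwedge_{j\in G(i)}x_j$ for every $i$, and already on the directed triangle it has the limit cycle $(1,0,0)\to(0,1,0)\to(0,0,1)\to(1,0,0)$, so no bound on the class can be extracted; you correctly sense this obstacle but do not overcome it.

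The paper removes the obstacle with one idea your sketch is missing: make the unique fixed point the all-zero state and place the negative loop only on alternate distance layers. Choose one vertex per initial strong component to form $S_0$, let $S_d$ be the vertices at distance exactly $d$ from $S_0$, set $I=S_0\cup S_2\cup S_4\cup\cdots$, and take $f_i=\overline{x_i}\wedge\bigwedge_{j\in G(i)}x_j$ for $i\in I$, $f_i=x_i\wedge\bigwedge_{j\in G(i)}x_j$ for $i\notin I$. Every odd-layer vertex has an even-layer in-neighbor whose negative loop kills it within two steps, and its own positive loop keeps it at $0$; the even layers other than $S_0$ then vanish at step three; and $S_0$ vanishes at step four because, $G$ being loop-less with in-degree at least one, each $S_0$-vertex has an in-neighbor inside its initial component and hence outside $S_0$. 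Symmetry (or taking the vertex of out-degree $n-1$ as $S_0$, so that $S_1=V\setminus S_0$) provides an in-neighbor of $S_0$ already in $S_1$ and saves the last step, giving class $3$. Without this parity device, or an equivalent mechanism that breaks all cycles of $G$ simultaneously while keeping $0$ as the unique attractor, none of your three sketches closes.
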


\begin{proof}
We first consider a set of vertices $I$ as follows. Let $S_0$ be a set of vertices containing exactly one vertex per initial strong component of $G$. For any $d \ge 1$, let $S_d$ be the set of vertices $i$ such that the minimum length of a path from $S_0$ to $i$ is exactly $d$. Suppose that every path can be reached from $S$ by a path of length at most $2m+1$, and let $I= S_0 \cup S_2 \cup S_4 \cup\dots\cup  S_{2m}$. Consider then the boolean $\mathring{G}$-function defined by: 
\begin{align*}
	f_i(x) &= \overline{x_i}\land\bigwedge_{j \in G(i)} x_j\quad\forall i\in I,\\
	f_i(x) &= x_i\land\bigwedge_{j \in G(i)} x_j\quad\forall i\notin I.
\end{align*}
Let us study how $f$ evolves around a vertex $j\notin I$. Since $j\in S_{2r+1}$ for some $0\leq r\leq m$, there exists $i\in G(j)\cap S_{2r}$ (thus $i\in I$). If $x_i=0$ then $f_j(x)=0$ and thanks to the positive loop on $j$ we have $f^{1+k}_j(x)=0$ for all $k\geq 0$; and if $x_i=1$ then thanks to the negative loop on $i$ we have $f_i(x)=0$ and thus $f^{2+k}_j(x)=0$ for all $k\geq 0$. Thus we have proved that
\[
\forall 0\leq r\leq m,~\forall k\geq 0,\qquad f^{2+k}(x)_{S_{2r+1}}=0, 
\]
So $f^{3+k}(x)_{S_{2r+2}}=0$ and we deduce that 
\[
f^{3+k}(x)_{V\setminus S_0}=0.
\]
If $G$ is symmetric, then every vertex in $S_0$ has a neighbor in $S_1$, thus we have $f^{3+k}(x)_{S_0}=0$ and so $f^{3+k}(x)=0$. If $G$ has a vertex $i$ of out-degree $n-1$, we can take 
$S_0=\{i\}$ and $S_1=V\setminus i$. Then $f^{2+k}(x)_{S_1}=0$ and since $i$ has a positive in-neighbor, it has an in-neighbor in $S_1$, and as in the undirected case we get $f^{3+k}(x)=0$. Finally, since $G$ is of positive minimal in-degree, each vertex in $S_0$ has an in-neighbor in $V\setminus S_0$, thus $f^{4+k}(x)_{S_0}=0$ so that $f^{4+k}(x)=0$. 
\end{proof}

We are finally interested in digraphs that admit a boolean nilpotent function of class $2$. Although we cannot completely characterize them, we exhibit large classes of examples and provide a necessary condition to admit a boolean nilpotent function of class $2$.

\begin{proposition} \label{lam:bar(Kn)}~
\begin{enumerate}
\item 
Any digraph where the number of vertices in the intersection of the in-neighborhood of $i$ and the out-neighborhood of $j$ is even for all vertices $i$ and $j$ admits a boolean nilpotent function of class 2.
\item 
The complete graph with loops $\mathring{K}_n$ ($n \ge 2$) admits a boolean nilpotent function of class~$2$.
\item 
Conversely, no digraph with a vertex of in-degree one admits a boolean nilpotent function of class~$2$.
\end{enumerate}
\end{proposition}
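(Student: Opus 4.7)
For part~(1), the natural candidate is the XOR $G$-function $f_i(x) = \bigoplus_{j\in G(i)} x_j$, which is manifestly a $G$-function (with the paper's convention that an empty XOR equals $0$, so sources simply receive the constant $0$). Exchanging the two sums in the second iterate yields
\[
f^2_i(x) \;=\; \bigoplus_{j\in G(i)}\bigoplus_{k\in G(j)} x_k \;=\; \bigoplus_k N_{i,k}\,x_k \pmod{2},
\]
where $N_{i,k}$ counts length-two walks from $k$ to $i$, equivalently the cardinality of the intersection of the in-neighborhood of $i$ with the out-neighborhood of $k$. The hypothesis makes every $N_{i,k}$ even, hence $f^2\equiv 0$ and $f$ has class at most~$2$.

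For part~(2), part~(1) already covers $\mathring K_n$ when $n$ is even (all the intersections then have cardinality $n$), but for odd $n$ the XOR construction degenerates to $f^2 = f$. I would therefore give a single AND-based construction valid for every $n\ge 2$: take $f_1(x) = \overline{\bigwedge_{k} x_k}$ and $f_i(x) = \bigwedge_{k} x_k$ for $i \ge 2$. Fixing all coordinates except $x_j$ equal to $1$ shows that every $f_i$ depends essentially on $x_j$, so $G(f) = \mathring K_n$. A short case split on whether $\bigwedge_k x_k$ equals $0$ or $1$ then yields $f^2(x) \equiv (1,0,\dots,0)$, so $f$ has class exactly~$2$.

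For part~(3), I would argue by contradiction. Suppose $f$ is a $G$-function of class $2$ and some vertex $i$ of $G$ satisfies $G(i) = \{j\}$. Since $f_i$ depends essentially on the single Boolean variable $x_j$, it must be one of the two single-input bijections on $\{0,1\}$; that is, $f_i(x) = g(x_j)$ with $g \in \{\mathrm{id}, \mathrm{neg}\}$. Then $f^2_i(x) = g(f_j(x))$ is constant, and injectivity of $g$ forces $f_j$ to be constant, so vertex $j$ has no in-neighbors in $G(f) = G$, which is the desired obstruction.

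The main difficulty I expect is in part~(2): breaking the XOR symmetry for odd $n$ is essential, and although the asymmetric AND construction above is brief to state, verifying that $f^2$ is genuinely constant for every $x\in\{0,1\}^n$ (rather than on a restricted subset) relies on the case split singling out the complement on the first coordinate. Parts~(1) and~(3) should both be straightforward once the right object (XOR in one case, the forced single-input bijection in the other) has been identified.
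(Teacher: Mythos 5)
Parts (1) and (2) are correct. Part (1) is exactly the paper's argument: the XOR network, the exchange of the two sums, and the parity of the number of length-two walks from $k$ to $i$. For part (2) you take a genuinely different route: the paper uses the fully symmetric and-net $f_i(x)=\overline{x_i}\wedge\bigwedge_{j\neq i}x_j$, proves $f^2\equiv 0$ for $n\geq 3$ by a count of the zeros of $x$, and has to dispose of $n=2$ separately (there the symmetric net merely exchanges $(0,1)$ and $(1,0)$, so it is not nilpotent), whereas your asymmetric net $f_1(x)=\overline{\bigwedge_k x_k}$, $f_i(x)=\bigwedge_k x_k$ for $i\geq 2$ works uniformly for every $n\geq 2$, with $f^2\equiv(1,0,\dots,0)$ after the two-line case split on the value of $\bigwedge_k x_k$; this is a slightly cleaner construction, and your observation that part (1) already covers even $n$ is a pleasant but unnecessary extra.

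Part (3) is where there is a gap, and it sits precisely in your closing phrase ``which is the desired obstruction''. Your deduction is the paper's argument contraposed: since $f_i=g(x_j)$ with $g$ bijective, constancy of $f_i^2=g\circ f_j$ forces $f_j$ to be constant, i.e.\ $j$ is a source of $G$. But the hypothesis is only that $i$ has in-degree one; nothing forbids its unique in-neighbor $j$ from being a source, so ``$j$ has no in-neighbors'' is not by itself a contradiction. In fact, in that residual case the literal statement fails: on the digraph with two vertices and the single arc $(j,i)$, the $G$-function $f_i(x)=x_j$, $f_j(x)=0$ is non-constant with $f^2\equiv 0$, hence nilpotent of class $2$, even though $i$ has in-degree one. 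The paper's own proof carries the same tacit assumption (it presupposes that $f_j$ takes both values $0$ and $1$, i.e.\ that $j$ has an in-neighbor), so your write-up in effect makes the hidden hypothesis visible rather than committing a new error; still, as a proof of the statement as written it is incomplete, and to close it you must either add the assumption that the unique in-neighbor of $i$ is not a source (for instance $i=j$, or minimum in-degree at least one, which covers the digraphs to which the proposition is meant to apply, such as $\mathring{K}_n$) or state explicitly that the proposition has to be read with that proviso.
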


\begin{proof}

Suppose that $G$ satisfies the condition of the first case, and let $f$ be the boolean $G$-function defined by: for all $i\in[n]$, $f_i(x)=\oplus_{j\in G(i)} x_j$. Let $i\in[n]$ and let us prove that $f^2(x)=\cst=0$. For all $k\in[n]$, let $p_k$ be the number of $j\in G(i)$ such that $k\in G(j)$. We have 
\[
f^2_i(x)=\bigoplus_{j\in G(j)} f_j(x)=\bigoplus_{j\in G(i)}\bigoplus_{k\in G(j)} x_k=\sum_{k\in [n]} p_kx_k \mod 2.
\]
Since $p_k$ is the size of the intersection between the out-neighborhood of $k$ and the in-neighborhood of $i$, $p_k$ is even and we deduce that $f^2_i(x)=0$.

\medskip
For $\mathring{K}_n$, the case is easily proved for $n=2$; let us then assume $n \ge 3$. Consider the boolean $\mathring{K}$-function defined by: for all $i\in[n]$, $f_i(x) =\overline{x_i}\land\bigwedge_{j \ne i}x_j$. We have $f(0)=0$, and if $x=1$ or if $x$ has at least two zeros, then $f(x)=0$. Finally, if $x$ has exactly one zero, say $x_i=0$, then $f_j(x)=0$ for all $j\neq i$. Thus $f(x)$ has at least two zeroes and by the preceding case $f^2(x)=0$.

\medskip
Finally, suppose that a digraph $G$ has a vertex $i$ with a unique in-neighbor $j$ (we may have $i=j$). Then for any boolean $G$-function $f$, we have $f_i(x)=x_j\oplus\epsilon$ with $\epsilon \in \{0,1\}$. So if $f_j(x)=0$ and $f_j(y)=1$ then $f^2_i(x)=\epsilon$ while $f^2_i(y)=1\oplus\epsilon$, thus $f$ cannot be a nilpotent function of class 2. 
\end{proof}

\section{Conclusion}\label{sec:conclusion}

We have shown that, in the non-boolean case, every signed digraph $G$ admits a very simple dynamics, that is, a dynamics that converges toward a unique fixed point in $k$ steps, with $k\leq \lfloor \log_2 n\rfloor +2$. Such a dynamics is described by a so called nilpotent function of class $k$. In the boolean case, such a function does not necessarily exist, even if we do not take into account the signs of $G$. This leads us to provide, in the unsigned case, some sufficient conditions for the existence of a boolean nilpotent function. All the results are summarized in Table~\ref{table}.

\medskip
Concerning future works, it could be interesting to establish the complexity of deciding if a digraph $G$ admits a boolean nilpotent function. Besides, in this paper, we have focused on systems that converge according to the so called {\em parallel} update schedule, where all components are updated synchronously at each step. It could be interesting to complete this study by considering other update schedules, such as the sequential or block-sequential ones. Finally, it could be interesting to establish a general upper-bound on the minimal convergence time. To be more precise, let $k(G)$ be the minimal class of a boolean nilpotent function on $G$, with the convention that $k(G)=0$ if $G$ admits no boolean nilpotent functions. What is the order of magnitude of $k(G)$ according to $n$? Is it linear with $n$? 

\paragraph{Acknowledgments} We thank an anonymous referee for pointing out works on the loop numbers and suggesting Proposition~\ref{pro:primitive}.

\begin{table}
\begin{tabular}{|l|c|c|c|}
\hline
\begin{tabular}{p{4cm}}
{\bf Classes of digraphs}
\end{tabular} 
&
{\bf Alphabet size} &
{\bf Nilpotent function} &
{\bf Upper bound}
\\\hline\hline
\begin{tabular}{p{4.5cm}}
Signed digraphs
\end{tabular} & $\geq 4$ & yes & $2$ \\\hline
\begin{tabular}{p{4.5cm}}
Signed digraphs
\end{tabular} & $3$ & yes & $\lfloor\log_2n\rfloor +2$ \\\hline
\begin{tabular}{p{4.5cm}}
Signed digraphs with all arcs labelled positively or negatively 
\end{tabular} & $3$ & yes & $2$ \\\hline\hline
\begin{tabular}{p{4.5cm}}
Strong signed digraphs in which all cycles have the same signs
\end{tabular} & $2$ & no &\\\hline
\begin{tabular}{p{4.5cm}}
Cycle $C_n$ 
\end{tabular} & $2$ & no &\\\hline
\begin{tabular}{p{4.5cm}}
Double cycle $C_{\ell,r}$ such that $\min(\ell,r)$ does not divide $\max(\ell,r)$
\end{tabular} & $2$ & no &  \\\hline
\begin{tabular}{p{4.5cm}}
Double cycle $C_{\ell,r}$ such that $\min(\ell,r)$ divides $\max(\ell,r)$
\end{tabular} & $2$ & yes & $2n-1$ \\\hline
\begin{tabular}{p{4.5cm}}
Digraphs containing a primitive spanning strict subgraph
\end{tabular} & $2$ & yes & $n^2-2n+3$ \\\hline
\begin{tabular}{p{4.5cm}}
Strong digraphs containing a loop or a wheel
\end{tabular} & $2$ & yes & $2n-1$ \\\hline
\begin{tabular}{p{4.5cm}}
Loop-less digraphs with minimal in-degree $\geq 2$ and maximal out-degree $n-1$
\end{tabular} & $2$ & yes & $3$ \\\hline
\begin{tabular}{p{4.5cm}}
Loop-less connected symmetric digraphs with $n\geq 3$
\end{tabular} & $2$ & yes & $3$ \\\hline
\begin{tabular}{p{4.5cm}}
Digraphs with a loop on each vertex and with minimal in-degree $\geq 2$
\end{tabular} & $2$ & yes & $4$ \\\hline
\begin{tabular}{p{4.5cm}}
Symmetric digraphs with a loop on each vertex
\end{tabular} & $2$ & yes & $3$ \\\hline
\begin{tabular}{p{4.5cm}}
Digraphs with a loop on each vertex and maximal out-degree $n$
\end{tabular} & $2$ & yes & $3$ \\\hline
\begin{tabular}{p{4.5cm}}
Complete digraphs with a loop on each vertex
\end{tabular} & $2$ & yes & $2$ \\\hline
\end{tabular}
\caption{\label{table} Classes of digraphs that admit or not nilpotent functions, for a given size of alphabet. In the positive case, an upper bound on the class of the nilpotent functions is given, according to the number $n$ of vertices.}
\end{table}

\end{document}